\documentclass[12pt]{amsart}
\usepackage[top=1.1in, hmargin=0.8in,height=8.6in]{geometry}
\usepackage{amssymb,amsthm, times}
\usepackage{delarray,verbatim}
\usepackage{multirow}
\usepackage{ifpdf}
\ifpdf
\usepackage[pdftex]{graphicx}
\DeclareGraphicsRule{*}{mps}{*}{} \else
\usepackage[dvips]{graphicx}
\DeclareGraphicsRule{*}{eps}{*}{} \fi

\usepackage{bm}

\linespread{1.20}

\usepackage{ifpdf}
\usepackage{color}
\definecolor{webgreen}{rgb}{0,.5,0}
\definecolor{webbrown}{rgb}{.8,0,0}
\definecolor{emphcolor}{rgb}{0.95,0.95,0.95}

\usepackage{hyperref}
\hypersetup{%
          colorlinks=true,
          linkcolor=webbrown,
          filecolor=webbrown,
          citecolor=webgreen,
          breaklinks=true}
\ifpdf \hypersetup{pdftex,
            pdfstartview=FitH, 
            bookmarksopen=true,
            bookmarksnumbered=true
} \else \hypersetup{dvips} \fi

\linespread{1.2}

\newcommand{\lapinv}{\Phi(q)}

\numberwithin{equation}{section}

\newtheorem{proposition}{Proposition}[section]

\newtheorem{remark}{Remark}[section]
\newtheorem{lemma}{Lemma}[section]

\newtheorem{assump}{Assumption}[section]

\numberwithin{remark}{section} \numberwithin{proposition}{section}
\numberwithin{corollary}{section}
\newcommand {\R}{\mathbb{R}}

\newcommand {\F}{\mathcal{F}}

\newcommand {\N}{\mathbb{N}}
\newcommand {\p}{\mathbb{P}}

\newcommand {\E}{\mathbb{E}}

\newcommand {\M}{\mathcal{M}}

\newcommand{\diff}{{\rm d}}

\newcommand{\lev}{L\'{e}vy }

\newcommand{\ex}{\mathbb{E}}
\newcommand{\setT}{\mathcal{T}}
\newcommand{\be}{\begin{equation}}
\newcommand{\ind}{1\hspace{-2.1mm}{1}} 
\newcommand{\ee}{\end{equation}}

\title[Optimal Multiple Stopping, Canadization, and Phase-Type Fitting]{An analytic  recursive method for optimal multiple stopping: Canadization and
phase-type fitting }
\thanks{This version: \today. }
\author[T. Leung]{Tim Leung}
\address[T. Leung]{IEOR Department, Columbia University, New York, USA.}
\email{leung@ieor.columbia.edu}
\author[K. Yamazaki]{Kazutoshi Yamazaki}
\address[K. Yamazaki]{Department of Mathematics,
Kansai University, Osaka, Japan }
\email{kyamazak@kansai-u.ac.jp}
\author[H. Zhang]{Hongzhong Zhang}
\address[H. Zhang]{Statistics Department, Columbia University, New York, USA}
\email{hzhang@stat.columbia.edu}

\date{}
\begin{document}

\begin{abstract}
We study an optimal multiple stopping problem  {for call-type payoff} driven by a spectrally negative \lev process.   The stopping times are separated by constant refraction times, and the discount rate can be positive or negative.  The computation involves a distribution of the \lev  process at a constant horizon and hence the solutions in general cannot be attained analytically.  Motivated by the  maturity randomization (Canadization) technique by Carr \cite{Carr_1998}, we approximate  the refraction times by independent, identically distributed \ Erlang random variables. In addition, fitting random  jumps  to  phase-type distributions, our method involves  repeated  integrations with respect to the \emph{resolvent measure} written in terms of the  {scale function} of the underlying  \lev process.  We derive a recursive    algorithm to compute the value function in closed form, and sequentially determine the optimal exercise thresholds.  A series of numerical examples are provided to compare   our analytic formula to results from Monte Carlo simulation.
\\

  \noindent \textbf{JEL Classification:} G32, D81, C61 \\
\noindent \textbf{Mathematics Subject Classification (2010):}  60G40, 60J75, 65C50 \\
\noindent \small{\textbf{Keywords:} optimal multiple stopping, refraction times,  maturity randomization, phase-type fitting,  \lev processes}   \\

\vspace{10pt}

\end{abstract}

\maketitle

\section{Introduction}
A wide array    of  financial applications can be formulated as  optimal multiple stopping problems.     These include energy delivery contracts such as swing options \cite{carmonaDayanik08,Carmona2008,ZeghalSwing},  derivatives liquidation \cite{Henderson2011,Leung2011b,Leung2011c},   real option analysis  \cite{Chiara,EricTim13,DixitPindyck,McDonald1985}, as well as employee stock options \cite{GrasselliHenderson2008,LeungSircarESO_MF09,leungsircar2} potentially  with additional reload and  shout options \cite{kwokshout}. In many of these applications,    consecutive stopping times are  separated by a constant or random  period. In the literature, especially that of swing options, this timing constraint is commonly referred to as the \emph{refraction period}.  In real option analysis, the refraction period can be interpreted as the time required to build an infrastructure after an investment  decision is made.

In this paper, we discuss an analytic recursive method to solve  a refracted optimal multiple stopping problem driven by a  \lev process. This paper focuses on a computational aspect of the optimal multiple stopping problem. It is well known from related studies that  the optimal strategy is of  threshold-type. Consequently,   the optimal stopping  problem reduces to finding  these thresholds.  However, the  determination  of the threshold  values still  involves computing  expectations of a functional  at the end of the constant  refraction period, which is generally not explicit.  In existing literature, Monte Carlo simulation methods are typically   employed  to evaluate these expectations  (see \cite{Bender, Hambly_MF04}, among others). However,  in practice this approach can be  computationally expensive  and even infeasible in terms of the run time.  Furthermore, with  multiple stopping,  one  needs to know the entire expected future payoff functional (with respect to the starting point of the underlying process) in order to  determine backwards these functionals as well as  the optimal threshold levels for earlier stages. The  simulation approach commonly involves  computing these expectations for arbitrarily large number of starting points, and this adds to the computational burden and limits its applicability.  In this regard, it is important to approximate these functions \emph{in closed form} so as to carry out efficiently the backward induction.

One  key  feature of our analysis is that the  rate for discounting future cash flows can be negative or positive.  A negative discount rate can
 accommodate a number of  applications, such as stock loans \cite{Cai2014,xiazhou} as well as  real option problems where the investment  cost grows faster than  the risk-free rate.  In these cases, one can interpret  {that} the effective discount rate is negative.  As argued by  Black \cite{black95} (see also references therein), it is commonly assumed that  the nominal short rate must stay  positive, but  the real interest rate can potentially  be negative, especially  during low-yield  regimes.  Hence, our  framework   permits discounting cash flows at a negative effective or  real interest rate.

In our model, the  underlying process is a spectrally negative \lev process, which has recently been widely used in mathematical finance.
Negative jumps can model sudden downward movements of an asset price. These processes  are suitable in the structural models of credit risk and generate non-zero limiting value of the credit spread as the maturity goes to zero  as studied in \cite{Leung_Yamazaki_2011,Hilberink2002a, Kyprianou2007a, Leung_Yamazaki_2010, Surya_Yamazaki_2012}. Some recent applications of
spectrally negative \lev processes include  the  pricing of perpetual American and exotic options
\cite{Alili2005,Avram_2004},   optimal dividend problems
\cite{Avram_et_al_2007,Kyprianou_Palmowski_2007,Loeffen_2008}, and
capital reinforcement timing \cite{Egami_Yamazaki_2010}.  For related optimal multiple stopping problems under spectrally negative models,  we mention  \cite{ZeghalSwing} for a swing put option with   constant refraction times, and \cite{Yamazaki_2012} with a more general payoff function without refraction times.
For   models with more general processes,
Leung et al.\ \cite{TimKazuHZ14} study  a refracted optimal  multiple stopping problem  driven by a two-sided  \lev process with general random refraction times, and Christensen and Lempa \cite{ChristensenLempa13} consider a similar problem  driven by a general Markov process with exponential refraction times.

Motivated by  the maturity  randomization (Canadization)  method proposed by Carr \cite{Carr_1998}, we provide  an analytical approximation by replacing  every  constant refraction time with an independent Erlang random variable, or a finite sum of independent, identically distributed exponentially distributed times.  Our method involves  repeated  integrations with respect to the \emph{resolvent measure}, which is written in terms of the \emph{scale function} of the underlying spectrally negative \lev process.
For the randomization methods applied in the pricing of finite-time horizon American options, we refer the reader to \cite{Kleinert_Schaik,kypri_Canada}; similar ideas are also used in recent work on the so-called Wiener-Hopf simulation \cite{Kuznetsov_2010_3, Ferreiro-Castilla_2013}. Bouchard et al.  \cite{BouchardELKTouzi2005} analyze a maturity randomization  algorithm and apply it to stochastic control problems with applications to optimal single stopping and 	dynamic hedging under uncertain volatility.

In order to apply the randomization method, the closed form expression must be preserved after the integration is applied with respect to the resolvent measure.  This is satisfied when the Laplace exponent of the underlying \lev process has a rational form, in which case the scale function can be written as a {finite} sum of (possibly complex) exponentials (see \cite{Kuznetsov_2011}).  Here, we focus on \emph{phase-type} \lev processes \cite{Asmussen2004}, which constitute an important class of \lev processes   with Laplace exponents of rational transform.  In principle, any spectrally negative \lev process can be approximated by a \lev process of this form (which we call \emph{phase-type fitting}). In particular, Egami \& Yamazaki \cite{Egami_Yamazaki_2010_2} give a series of numerical experiments for approximating the scale function of a general spectrally negative \lev process by that of a phase-type \lev process.  For a hyperexponential fitting method applied to a CGMY process with a completely monotone \lev density, see \cite{Asmussen2007a}.

Motivated by these, we combine phase-type fitting and randomization methods to compute efficiently the solutions of the optimal multiple stopping problem.   Specifically, given a general spectrally negative \lev process, we first approximate it by a phase-type \lev process, and then approximate the solutions by randomizing the constant refraction times using independent, identically distributed Erlang random variables.   We shall show that the resulting approximating value functions are  written in  closed form, with  the associated parameters  computed recursively.

 {Our objective is to evaluate numerically the effectiveness of  our approach, especially the accuracy of the value functions as a result of  (i) phase-type   fitting of the jump distribution,  and (ii)  refraction times randomization. Regarding  part (i), while it is theoretically known that the class of phase-type distributions is dense in the class of all positive-valued distributions,  there does not currently exist a single algorithm that can produce a sequence of phase-type distributions that are guaranteed to converge to a desired distribution (unless it has a completely monotone density).  As for   refraction times randomization, we refer to \cite{BouchardELKTouzi2005} and \cite{levendorski2011convergence} for the    related convergence results on the randomization approach. In a related study \cite{Kleinert_Schaik}, detailed numerical experiments are conducted to confirm the convergence for pricing American put options when the \lev process is in the meromorphic class. It is noted, however, that these results do not apply directly to our case, because we deal with a multiple optimal stopping problem where the refraction time is randomized.  In addition, the payoff function (call type) is not bounded. For these reasons, it is important that our approach is evaluated numerically.}

In this paper, through a series of numerical examples, we show that our method is capable of accurately and  efficiently computing  the sequence of value functions and optimal exercise thresholds.   In addition, the run-time analysis shows that this approach is significantly faster than the  {Monte Carlo} simulation methods  {that adopt the Euler's method to approximate the expected value function of the next stage at the constant refraction time}.  On the other hand,  as the number of stages and the shape parameter of the Erlang distribution increase, the usual machine double precision may not be  capable of computing the parameters in the value functions.  Barring this potential issue,  our closed form formulas are confirmed by  {comparing with the simulated values} and allow for more efficient computation.

The rest of the paper is organized as follows. Section  \ref{section_preliminaries} reviews the optimal multiple stopping problem for a spectrally negative \lev process and the characterization of the optimal strategies in terms of up-crossing times.   Section \ref{section_recursive} presents  our  randomization method and  {derives}  the analytic value functions  recursively using the resolvent measure.  Section \ref{section_phase_type} discusses the  applications of  phase-type fitting and   {shows} a backward induction formula to compute the parameters of the  value functions for the randomized problem.   We conclude in Section \ref{section_numerical_results} with numerical evaluation of our proposed method.

 \section{Preliminaries} \label{section_preliminaries}
Let  $(\Omega, \F, \p)$  be a probability space hosting a spectrally negative \lev process $X=(X_t)_{t\ge 0}$.  We define $\mathbb{F}:=(\F_t)_{t\ge0}$ as the completed filtration generated by $X$, and $\setT$  the set of all $[0,\infty]$-valued $\mathbb{F}$-stopping times.  We denote  $\p_x$ as the  probability  and   $\E_x$ as the expectation with initial value $X_0 = x \in \R$.  In particular, when $X_0=0$, we drop the subscripts in $\p_x$ and $\E_x$.

 {By the L\'evy-Khintchine formula, $X$ can be} characterized by its \emph{Laplace exponent}  given by
\begin{align}
\psi(s)  := \log \E \left[ e^{s X_1} \right] =  c s +\frac{1}{2}\sigma^2 s^2 + \int_{(-\infty,0)} (e^{s z}-1 - s z 1_{\{-1 < z < 0\}}) \Pi (\diff z), \quad s \geq 0, \label{laplace_spectrally_positive}
\end{align}
where $\Pi$ is a \lev measure with the support $(-\infty, 0)$ that satisfies the integrability condition $\int_{(-\infty,0)} (1 \wedge z^2) \Pi(\diff z) < \infty$.  It has paths of bounded variation if and only if $\sigma = 0$ and $\int_{( -1,0)}|z|\, \Pi(\diff z) < \infty$; in this case, we write \eqref{laplace_spectrally_positive} as
\begin{align}
\psi(s)   =  \widetilde{c} s +\int_{(-\infty,0)} (e^{s z}-1 ) \Pi (\diff z), \quad s \geq 0,
\end{align}
with $\widetilde{c} := c - \int_{( -1,0)}z\, \Pi(\diff z)$.  We exclude the case in which $-X$ is a subordinator (i.e., $X$ has {monotonically decreasing} paths a.s.). This assumption implies that $\widetilde{c}> 0$ when $X$ is of bounded variation.  {In addition,  we assume throughout the paper that $X_t$ admits a density; this is guaranteed to be satisfied if $\sigma > 0$ or the absolutely continuous part of the \lev measure has an infinite mass (see e.g.\ \cite{Tucker}).
\begin{assump} \label{assump_absolutely_continuous}
We assume that $\p \{X_t  \in \diff x \} \ll \diff x$ for all $t > 0$.
\end{assump}}

We consider the problem with sequential stopping (exercise) opportunities, where the payoff from each  exercise is
\begin{align}\phi(x):=\mathrm{e}^x-K, \quad  x \in \R,
\end{align}
for some constant $K > 0$.  The associated  optimal multiple stopping problem is defined as
\be
v^{(N)}(x) :=\sup_{\vec{\tau}\in \setT^{(N)}}\ex_x\left[\sum_{n=1}^N \mathrm{e}^{-\alpha\tau_n}\phi(X_{\tau_n})\ind_{\{\tau_n<\infty\}}\right], \quad  x \in \R, \; N \in \N. \label{Problem}
\ee
Here, the optimization is over all increasing sequence of stopping times in such a way that any two consecutive stopping times are separated by a  refraction period $\delta > 0$.  In other words, the set of admissible strategies is given by
\begin{align}
\setT^{(N)} :=\{\vec{\tau}=(\tau_N,\ldots,\tau_1) \in \setT^N \,:\,    \tau_{n+1}+\delta \le \tau_n, n = N-1, \ldots, 1\}.
\end{align}
Here we label in such a way that $\tau_n$ is the stopping time when there are $n$ stopping opportunities  left.


%

Next, for any given discount rate $\alpha \in \R$, we define the process
\begin{align}
X_t^{(\alpha)} := X_t - \alpha t, \quad t \geq 0,
\end{align}
 which is  {either} a spectrally negative \lev process  {or the negative of a subordinator}.
As is well  known, the limit of the running supremum  $\overline{X}_\infty^{(\alpha)} := \sup_{0 \leq t < \infty} X_t^{(\alpha)}$ is a $\p$-exponential random variable with rate  parameter
\begin{align}
\widetilde{\Phi}{(\alpha)} := \sup \left\{ \lambda \geq 0: \psi(\lambda) - \alpha \lambda   { \le } 0\right\}, \end{align}
with the convention that $\overline{X}_\infty^{(\alpha)} =\infty$ a.s.\  when $\widetilde{\Phi}{(\alpha)} = 0$,  {and that $\overline{X}_\infty^{(\alpha)} =0$ a.s.\ when $\widetilde{\Phi}{(\alpha)} = \infty$}. Hence, if $\widetilde{\Phi}(\alpha) > 0$,  then the  expectation
\begin{align}
\E [e^{\varrho \overline{X}^{(\alpha)}_\infty}] = \Big( 1 - \frac \varrho {\widetilde{\Phi}(\alpha)}\Big)^{-1} < \infty,
\end{align}
for any  constant $\varrho \in (0,\widetilde{\Phi}(\alpha))$.
Since  $\lambda \mapsto \psi(\lambda) - \alpha \lambda$ is strictly convex on $[0,\infty)$ and is zero at the origin, $\widetilde{\Phi}(\alpha) > 1$  {if} $\psi(1) < \alpha$.  In this case, we can choose $\hat{\varrho} > 1$ such that the above moment generating function is finite, and thus, with the positivity of $K$, we have 
\be\label{eq:ass}
\ex_x\!\left[\,\left(\Big( \sup_{0\le t<\infty} \mathrm{e}^{-\alpha t}\phi(X_t) \Big)^+ \right)^{\hat{\varrho}}\,\right]<\infty, \quad x\in\R.
\ee
This  is a critical condition so that the solution of the problem is nontrivial (see, e.g., \cite{Carmona2008, TimKazuHZ14, ZeghalSwing}).  In fact, we can slightly weaken the condition for the case $\alpha < 0$ (where $\exp (-\alpha t) K$ grows to infinity) to accommodate the case $\psi(1) = \alpha$ given $\psi'(1) < 0$ (see  \cite{TimKazuHZ14} for a proof).

\begin{assump}  \label{assump_alpha} We assume that either (i) $\psi(1) < \alpha$ or (ii) $\psi(1) = \alpha < 0$ and $\psi'(1) < 0$ holds.
\end{assump}

This guarantees that the value function is finite and  admits a nontrivial solution. The optimal strategy is given by a sequence of \emph{up-crossing} times of the form
\begin{align} \label{optimal_solutions_form}
\begin{split}
\tau^*_N &:= T_{a_N^*}^+, \\
\tau^*_{n} &:= T_{a_n^*}^+ \circ \theta_{\tau^*_{n+1} + \delta} + \tau^*_{n+1} + \delta, \quad 1 \leq n \leq N-1,
\end{split}
\end{align}
for some parameters $a^* = (a_n^*)_{1 \leq n \leq N}$ where $\theta$ is the time-shift operator and
\begin{align}
T_a^+ := \inf \left\{ t > 0:  X_t \geq a \right\}, \quad a \in \R,
\end{align}
with the usual convention that $\inf \varnothing = \infty$.
 The optimality of the threshold strategy for the single stopping problem has been shown by Mordecki \cite{mordecki2002}. The same  characterization  holds for the multi-stage problem, and we  refer the reader to \cite{Carmona2008,ZeghalSwing} and the authors' companion paper \cite{TimKazuHZ14} for the proof.

In view of these characterizations, the implementation of the optimal strategy reduces to identifying the values of $a^*$, and this is the primary objective of our paper.  To  this end, we first rewrite  \eqref{Problem} recursively as follows (see \cite{kobylanski2011,TimKazuHZ14}, among others):
\be
v^{(n)}(x):=\sup_{ {\tau}\in \setT}\ex_x\!\left[\mathrm{e}^{-\alpha\tau}\phi^{(n)}(X_{\tau})\ind_{\{\tau<\infty\}}\right],\label{single1}
\ee
where \be \phi^{(n)}(x) :=\phi(x)+\ex_x\!\left[\,\mathrm{e}^{-\alpha\delta}v^{(n-1)}(X_\delta)\,\right],\quad n = 1, 2, \ldots, N,\label{single2}\ee and  $v^{(0)}(x) := 0.$

Given that  an optimal stopping time in \eqref{single1} is of  threshold type, the value of $a_n^*$ can be determined  by maximizing the value function over candidate threshold values:
\begin{align}
a_n^* \in \arg \max_{a \in \R}v^{(n)}_a(x), \label{def_a_max}
\end{align}
(which maximizes uniformly in $x \in \R$), where
\begin{align}
v^{(n)}_a(x) := \ex_x\!\left[\mathrm{e}^{-\alpha T_a^+}\phi^{(n)}(X_{T_a^+})\ind_{\{T_a^+<\infty\}}\right], \quad a,x \in \R.
\end{align}

The following lemma is well-known for positive discount rate  $\alpha \geq 0$ (see \cite{Kyprianou2006}, Theorem 3.12). Under Assumption \ref{assump_alpha}, we generalize the result to accommodate  the case with $\alpha < 0$.  Let
\begin{align}
\Phi(\alpha) := \sup \left\{ \lambda \geq 1: \psi(\lambda) = \alpha\right\},
\end{align}
which is guaranteed to exist by Assumption \ref{assump_alpha} (which postulates that $\psi(1) \leq \alpha$) and because $\psi$ is strictly convex on $[1, \infty)$.
\begin{lemma} \label{lemma_laplace_hitting_time}  {Under Assumption \ref{assump_alpha},} we have $\ex  [\mathrm{e}^{-\alpha T_y^+} \ind_{\{T_y^+<\infty\}}]  =  \exp (-\Phi(\alpha)y)$ for $y > 0$ and equals $1$ otherwise.
\end{lemma}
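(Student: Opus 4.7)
For $y\le 0$ the claim is immediate, as $X_0 = 0 \ge y$ and the regularity of $X$ for $(0,\infty)$ from the origin (guaranteed by the exclusion of $-X$ being a subordinator, which leaves either unbounded variation or bounded variation with $\widetilde{c} > 0$) forces $T_y^+ = 0$ a.s., so the expectation equals $1$. The substantive content is the case $y > 0$, which I would handle by an Esscher transform based at $\Phi(\alpha)$.

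The key observation is that, by definition, $\psi(\Phi(\alpha)) = \alpha$, whence the process
\[
M_t := \exp\bigl(\Phi(\alpha) X_t - \alpha t\bigr),\qquad t \ge 0,
\]
satisfies $\E[M_t] = 1$ for every $t$ and, in view of the stationary independent increments of $X$, is a true (not merely local) $\p$-martingale. Define $\mathbb{Q}$ on $\F$ by $d\mathbb{Q}/d\p\,|_{\F_t} = M_t$. Standard Esscher-type computations then show that under $\mathbb{Q}$ the process $X$ remains a spectrally negative L\'evy process with Laplace exponent
\[
\psi_{\mathbb{Q}}(s) := \psi(s + \Phi(\alpha)) - \alpha,\qquad s \ge 0,
\]
so that $\psi_{\mathbb{Q}}'(0) = \psi'(\Phi(\alpha))$. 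The next step is to check that Assumption \ref{assump_alpha} forces $\Phi(\alpha) > 1$ to lie strictly to the right of the minimizer of $\psi$, hence $\psi'(\Phi(\alpha)) > 0$: when $\psi(1) < \alpha$ this follows from strict convexity directly, while when $\psi(1) = \alpha$ the side condition $\psi'(1) < 0$ rules out the degenerate possibility that $\Phi(\alpha) = 1$ sits at the minimum. Consequently, $X$ drifts to $+\infty$ under $\mathbb{Q}$, so that $\mathbb{Q}(T_y^+ < \infty) = 1$ for every $y > 0$.

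The final computation is then essentially a bookkeeping exercise. Since $X$ is spectrally negative, $X_{T_y^+} = y$ on $\{T_y^+ < \infty\}$, and hence $M_{T_y^+}^{-1} e^{-\alpha T_y^+} = e^{-\Phi(\alpha)y}$ on that event. Applying the change of measure at the bounded stopping time $T_y^+ \wedge n$ and letting $n \to \infty$ via monotone convergence,
\begin{align*}
\E\bigl[e^{-\alpha T_y^+}\ind_{\{T_y^+ < \infty\}}\bigr]
 &= \lim_{n\to\infty}\E\bigl[e^{-\alpha(T_y^+\wedge n)}\ind_{\{T_y^+\le n\}}\bigr] \\
 &= \lim_{n\to\infty}\E^{\mathbb{Q}}\bigl[M_{T_y^+\wedge n}^{-1}e^{-\alpha(T_y^+\wedge n)}\ind_{\{T_y^+\le n\}}\bigr] \\
 &= e^{-\Phi(\alpha)y}\lim_{n\to\infty}\mathbb{Q}(T_y^+\le n) = e^{-\Phi(\alpha)y}.
\end{align*}

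The main obstacle is the regime $\alpha < 0$, where $e^{-\alpha t}$ grows without bound and the naive optional-stopping argument for $M$ breaks down for lack of uniform integrability at $T_y^+$. The Esscher transform sidesteps this by absorbing the explosive factor into $d\mathbb{Q}/d\p$, after which the entire difficulty reduces to an elementary convexity check on $\psi$, for which Assumption \ref{assump_alpha} is precisely calibrated.
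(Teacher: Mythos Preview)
Your argument is correct. Both you and the paper build on the exponential martingale $M_t = \exp(\Phi(\alpha)X_t - \alpha t)$, but the routes diverge from there. The paper applies optional stopping to $M$ directly under $\p$ at $t \wedge T_y^+$ and then justifies passing to the limit by \emph{dominated} convergence, producing an explicit bound
\[
e^{\Phi(\alpha) X_{t \wedge T_y^+} - \alpha (t \wedge T_y^+)} \le e^{(\Phi(\alpha)-1)y}\,e^{\overline{X}^{(\alpha)}_\infty},
\]
which is integrable because $\widetilde{\Phi}(\alpha) > 1$ when $\psi(1) < \alpha$; the boundary case $\psi(1) = \alpha$ is then handled separately by approximating $\alpha$ from above and invoking continuity of $\Phi$ together with monotone convergence. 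Your approach instead passes to the Esscher measure $\mathbb{Q}$, reduces the question to $\mathbb{Q}(T_y^+ < \infty) = 1$, and verifies this via the single convexity check $\psi'(\Phi(\alpha)) > 0$, which covers both parts of Assumption~\ref{assump_alpha} at once. The trade-off: the paper's argument is self-contained (no auxiliary measure), but pays for it with the explicit dominating bound and a case split; your argument leans on the Esscher machinery but is more uniform and avoids estimating the running supremum altogether. Your handling of $y \le 0$ via regularity of $0$ for $(0,\infty)$ is also fine and slightly more explicit than the paper's one-line remark.
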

\begin{proof}
We shall show for  $\psi(1) < \alpha$ ({and hence} $\Phi(\alpha) > 1$  by the strict convexity of $\psi$); the case $\psi(1) = \alpha$ then follows immediately by the monotone convergence theorem and the continuity of $\Phi(\cdot)$.

Fix $y > 0$. The process $(\exp (\Phi(\alpha) X_t - \alpha t))_{t \geq 0}$ is a martingale (see {(3.11)} of \cite{Kyprianou2006}), and hence we can derive (as in the first part of the proof of \cite{Kyprianou2006}, Theorem 3.12) that
\begin{align}
\E \left[ e^{\Phi(\alpha) X_{t \wedge T^+_y} - \alpha (t \wedge T^+_y)}\right] = 1, \quad t \geq 0. \label{martingale_esccher}
\end{align}
Here the integrand of the left-hand side is bounded in $t$ by an integrable random variable, i.e.,
\begin{align}
e^{\Phi(\alpha) X_{t \wedge T^+_y} - \alpha (t \wedge T^+_y)} =e^{(\Phi(\alpha)-1) X_{t \wedge T^+_y}}e^{X_{t \wedge T^+_y}^{(\alpha)}} \leq  e^{(\Phi(\alpha)-1) X_{t \wedge T^+_y}}e^{\overline{X}_\infty^{(\alpha)}}\leq e^{(\Phi(\alpha)-1) y}e^{\overline{X}_\infty^{(\alpha)}},
\end{align}
where the last inequality holds because $\Phi(\alpha) > 1$ and $X_{t \wedge T_y^+} \leq y$ a.s.\ due to the lack of positive jumps.
Hence applying dominated convergence in \eqref{martingale_esccher} gives
\begin{align}
1=\E \left[ e^{\Phi(\alpha) X_{T^+_y} - \alpha T^+_y} 1_{\{ T_y^+ < \infty\}}\right] =e^{\Phi(\alpha) y} \E \left[ e^{- \alpha T^+_y} 1_{\{ T_y^+ < \infty\}}\right],
\end{align}
where the last equality holds as $X_{T_y^+} = y$ on $\{ T_y^+ < \infty \}$.  This completes the proof.
\end{proof}

Due to Lemma \ref{lemma_laplace_hitting_time} and to the fact that the process $X$ necessarily creeps upward and hence $X_{T_a^+} = a$ on $\{ T_a^+ < \infty \}$ under $\p_x$ with $x \leq a$, we can write
\begin{align} \label{v_k_expression}
v^{(n)}_a(x) = \left\{ \begin{array}{ll}e^{-\Phi(\alpha)(a-x)} \phi^{(n)}(a), & x < a, \\ \phi^{(n)}(x), & x \geq a. \end{array}\right.
\end{align}

\begin{remark} \label{remark_monotonicity}
It can be shown that the threshold levels are bounded from below by $\log K$ and increase as the number of remaining stopping opportunities decreases, i.e., $\log K < a_N^* \leq \cdots \leq a_{1}^*$. It has been shown in \cite{TimKazuHZ14} that this monotonicity also holds when the refraction times $\delta$'s are generalized to be independent, identically distributed random variables provided that  they are independent of $X$, and $X_\delta$ admits a density.  They also show that there  exists a limit $a^*_\infty := \lim_{N \rightarrow \infty}a^*_N \geq \log K$. \end{remark}

\section{Recursive Analytic Formula} \label{section_recursive}
The characterization of the optimal strategy as described in the previous section greatly simplifies the problem.  In practice, however, the solution cannot be obtained analytically because in general the distribution of $X_\delta$ is not known in view of \eqref{single1}.
The biggest hurdle therefore is to compute the expectation
\begin{align}
\E_x \left[ e^{- \alpha \delta}v^{(n-1)} (X_\delta) \right], \quad 2 \leq n \leq N. \label{expectation_at_refraction}
\end{align}
 In order to circumvent  this difficulty, we adopt the  Canadization technique by Carr \cite{Carr_1998} and approximate \eqref{expectation_at_refraction} by replacing the constant $\delta$ with some independent Erlang random variable $\eta(M,\lambda)$, or equivalently a sum of $M$  independent, identically distributed exponential random variables with parameter $\lambda$.   Herein, we set
\begin{align}
\lambda = \lambda^{(M)} := M/\delta.
\end{align}
  Then, $\eta(M, \lambda) \approx \delta$ for large $M$ by the strong law of large numbers. In other words, we solve the randomized version of the optimal multiple stopping problem in order to approximate the one with constant refraction times. For optimal stopping problems with  random refraction times, the  filtration needs to be modified; see \cite{ChristensenLempa13} for the precise construction. However, this technical detail  does not affect the resulting threshold structure of the optimal stopping strategies, as discussed in  \cite{ChristensenLempa13,TimKazuHZ14}.

In this section, we shall show that the  value functions with randomized refraction times   can be obtained recursively via the resolvent measure written in terms of the scale function.

\subsection{First Step} We first construct the base case.
 In view of  \eqref{v_k_expression} for $n=1$, because $\phi^{(1)} \equiv \phi$, the value of $a_1^*$ is obtained analytically via \eqref{def_a_max}. The first order condition becomes
\begin{align}
0 = \phi'(a)  - \Phi(\alpha) \phi(a),
\end{align}
which admits a unique solution given by
\begin{align}
a^*_1 = \log \frac {\Phi(\alpha) K} {\Phi(\alpha)-1}. \label{a_1_star}
\end{align}
It is easy to check that this is equivalent to the smooth fit condition $v^{(1)'}(a^*_1+)=v^{(1)'}(a^*_1-)$.

We now start at
\begin{align} \label{u_1_0}
u^{(1,0)}(x) := v^{(1)}(x) = \left\{ \begin{array}{ll} \phi(x) , & x \geq a^*_1, \\ \phi(a^*_1) e^{-\Phi(\alpha)(a^*_1-x)}, & x < a^*_1,
\end{array} \right.
\end{align}
and derive an analytical expression for the expectation
\begin{align}
\E_x \left[ e^{- \alpha \eta(M, \lambda)}v^{(1)} (X_{\eta(M, \lambda)}) \right]
\end{align}
as an approximation of \eqref{expectation_at_refraction} for $n=2$.

The very initial task is to compute the case of exponential time horizon,
\begin{align}
u^{(1,1)}(x) := \E_x \left[ e^{- \alpha \eta(1,\lambda)}v^{(1)} (X_{\eta(1,\lambda)}) \right]  = \lambda \int_0^\infty  e^{-(\lambda+\alpha) t} \E_x \left[ v^{(1)} (X_{t}) \right] \diff t
= \mathcal{M}_x  u^{(1,0)}, \label{first_recursion}
\end{align}
where we define, for any measurable $f$, {that (whenever it exists)}
\begin{align}
\mathcal{M}_x  f := \lambda \int_\R   \Theta^{(\lambda+\alpha)}(x, \diff y) f(y) \label{operator_M}
\end{align}
for a resolvent measure
\begin{align}
\Theta^{(q)}(x, \diff y) := \int_0^\infty e^{-q t} \p_x \left\{ X_t \in \diff y \right\} \diff t, \quad y \in \R \; \textrm{and} \; q > 0.
\end{align}

It is known for the case of spectrally negative \lev process that this resolvent measure admits a density and can be written in terms of the so-called \emph{scale function}.  Fix $q \ge 0$,  the  ($q$-)scale function,
\begin{align}
W^{(q)}: \R \rightarrow [0,\infty),
\end{align}
 is zero on $(-\infty,0)$, continuous and strictly increasing on $[0,\infty)$, and is characterized by the Laplace transform:
\begin{align}
\int_0^\infty e^{-s x} W^{(q)}(x) \diff x = \frac 1
{\psi(s)-q}, \qquad s > \lapinv,
\end{align}
where
\begin{equation}
\lapinv :=\sup\{\lambda \geq 0: \psi(\lambda)=q\}. 
\end{equation}
By Corollary 8.9 of \cite{Kyprianou2006}, the resolvent measure $\Theta^{(q)}(x,\diff y)$ has a density $\theta^{(q)}(y-x)$ with respect to the Lebesgue measure where
\begin{align}
\theta^{(q)}(z)  := \Phi'(q) e^{- \Phi(q) z} - W^{(q)}(-z), \quad z \in \R. \label{resolvent_density_general}
\end{align}
Hence on condition that $\lambda + \alpha > 0$, \eqref{first_recursion} can be rewritten using the scale function.

This manipulation can be applied repeatedly by further adding more  independent, identically distributed exponential random time horizons.
Indeed, for any $2 \leq m \leq M$ and $x \in \R$,
\begin{multline}
u^{(1,m)}(x) := \E_x \left[ e^{- \alpha \eta(m,\lambda)}v^{(1)} (X_{\eta(m,\lambda)}) \right]  = \lambda \int_0^\infty  e^{-(\lambda+\alpha) t}\E_x [u^{(1,m-1)}(X_t) ] \diff t  \\
=  \mathcal{M}_x u^{(1,m-1)} =  \cdots = \mathcal{M}_x^{m} u^{(1,0)}.
\end{multline}

\subsection{Multiple Steps}

Now that \eqref{expectation_at_refraction} is approximated by $u^{(1,M)}(x)$ we can approximate $\phi^{(2)}(x)$ as in \eqref{single2} by
\begin{align}
\widetilde{\phi}^{(2)}(x) &:= \phi(x) + u^{(1,M)}(x).
\end{align}
Using this approximation, we can obtain an approximation to $a_2^*$, say $\widetilde{a}_2^*$, by the first order condition
$0= \widetilde{\phi}^{(2)'}(a) - \Phi(\alpha) \widetilde{\phi}^{(2)}(a)$,
and obtain an approximation to $v^{(2)}$:
\begin{align}
u^{(2,0)}(x) &\equiv \widetilde{v}^{(2)}(x) := \left\{ \begin{array}{ll} \widetilde{\phi}^{(2)}(x) , & x \geq \widetilde{a}^*_2, \\ \widetilde{\phi}^{(2)}(\widetilde{a}^*_2) e^{-\Phi(\alpha)(\widetilde{a}^*_2-x)}, & x < \widetilde{a}^*_2.
\end{array} \right.
\end{align}
Similarly to the first step,  for any $1 \leq m \leq M$,
\begin{align}
u^{(2,m)}(x) := \E_x \left[ e^{- \alpha \eta(m,\lambda)}\widetilde{v}^{(2)} (X_{\eta(m,\lambda)}) \right] =  \mathcal{M}_x u^{(2,m-1)} = \cdots = \mathcal{M}_x^{m} u^{(2,0)},
\end{align}
which gives an approximation for \eqref{expectation_at_refraction} for $n=3$.

Continuing in this fashion, we can derive the approximations defined by
\begin{align}
\widetilde{\phi}^{(n)}(x) &:= \phi(x) + u^{(n-1,M)}(x), \label{recursion_u_n_0_part1} \\
\widetilde{a}_n^* &\in \arg \{a \in \R : \widetilde{\phi}^{(n)'}(a) - \Phi(\alpha) \widetilde{\phi}^{(n)}(a) = 0 \}, \label{recursion_u_n_0_part2} \\
u^{(n,0)} (x) &\equiv \widetilde{v}^{(n)}(x) := \left\{ \begin{array}{ll} \widetilde{\phi}^{(n)}(x) , & x \geq \widetilde{a}^*_n, \label{recursion_u_n_0_part3} \\ \widetilde{\phi}^{(n)}(\widetilde{a}^*_n) e^{-\Phi(\alpha)(\widetilde{a}^*_n-x)}, & x < \widetilde{a}^*_n,
\end{array} \right. \\
u^{(n,m)}(x) &:= \E_x \left[ e^{- \alpha \eta(m,\lambda)}\widetilde{v}^{(n)} (X_{\eta(m,\lambda)}) \right] =   \mathcal{M}_x^{m} u^{(n,0)}, \quad 1 \leq m \leq M.
\label{recursion_u_n_0_part4}
\end{align}
Finally, $\widetilde{v}^{(N)}(x)$ is the desired approximation to our multiple stopping problem.
For the rest of the paper, we let $\widetilde{a}^*_1 := a^*_1$ for notational convenience.

\section{Spectrally Negative Phase-type Case}\label{section_phase_type}
In order to carry out the algorithm described in the previous section, it is important that the backward induction can be done analytically.  That is to say, the closed-form expression must be preserved after the operator $\M$ as in \eqref{operator_M} is applied.  In this section, we shall show that this is possible if we focus on the phase-type \lev process of the form \eqref{form_levy_compound_poisson} below.

It is known from  {Proposition 1} of \cite{Asmussen2004}   that, for any spectrally negative \lev process $X$, there exists a sequence of spectrally negative phase-type \lev processes $X^{(n)}$ converging to $X$ in the space  $D[0,\infty)$ of real-valued  right-continuous functions   with left-limits (c\`{a}dl\`{a}g); this implies that  $X_1^{(n)} \rightarrow X_1$ in distribution (see Remark 1 of \cite{Asmussen2004}  and Corollary VII 3.6 of \cite{Jacod_Shirayev_2003}).  From this, we naturally conjecture that the resolvent of  $X$ can be approximated by that of a phase-type \lev process.  Indeed, Egami \& Yamazaki \cite{Egami_Yamazaki_2010_2} show numerically that the scale function of a general spectrally negative \lev process can be approximated at least when the \lev measure is finite.   For the case of infinite \lev measure satisfying the condition of Asmussen and Rosi\'nski \cite{Asmussen_Rosinski_2001}, Asmussen et al.\ \cite{Asmussen2007a} show that a Brownian motion can be used as a proxy to  approximate the frequent infinitesimal jumps (where they consider a \lev measure with a completely monotone density).  In the next section, we analyze the approximation errors through numerical experiments. 

Throughout this section, let $X$ be a spectrally negative phase-type \lev process,
\begin{equation}
  X_t  - X_0= \widetilde{c} t+\sigma B_t - \sum_{n=1}^{N_t} Z_n, \quad 0\le t <\infty, \label{form_levy_compound_poisson}
\end{equation}
for some $\widetilde{c} \in \R$ and $\sigma \geq 0$.  Here $B=(B_t)_{ t\ge 0}$ is a standard Brownian motion, $N=(N_t)_{t\ge 0}$ is a Poisson process with arrival rate $\rho$, and  $Z = (Z_n)_{n = 1,2,\ldots}$ is an independent, identically distributed sequence of phase-type-distributed random variables with representation $(d,{\bm \alpha},{\bm T})$.  These processes are assumed to be mutually independent.
Recall that a  distribution on $(0,\infty)$ is of phase-type  if it is the distribution of the absorption time  in a finite state continuous-time Markov chain consisting of one absorbing state and $d\in\mathbb{N}$ transient states. Thus,  any phase-type distribution can be represented by $d$, the $d\times d$ transition intensity matrix over all  transient states $\bm{T}$, and the initial distribution of the Markov chain $\bm{\alpha}$.

{Let $\mathbf{t}$ be the transition probabilities from the $d$ transient states to the absorbing state.} The Laplace exponent \eqref{laplace_spectrally_positive} is then
\begin{align}
 \psi(s)   = \widetilde{c} s + \frac 1 2 \sigma^2 s^2 + \rho \left( {\bm \alpha} (s {\bm I} - {\bm{T}})^{-1} {\mathbf{t}} -1 \right),
 \end{align}
which can be extended to $s \in \mathbb{C}$ except at the negative of eigenvalues of ${\bm T}$.

For the rest, let us define
\begin{align}
p := \alpha + \lambda = \alpha + M/\delta,
\end{align}
and we assume this to be strictly positive.

 Suppose $\{ -\xi_{i,p}; i \in \mathcal{I}_p \}$ is the set of the roots of the equality $\psi(s) = p$ with negative real parts. {As is discussed in Section 5.4 of \cite{Kuznetsov_2011} and has been confirmed numerically in \cite{Egami_Yamazaki_2010_2} {(see also our numerical results in Section  \ref{section_numerical_results}
 of this current paper)}, it is highly unlikely that any root in $\mathcal{I}_p$ has multiplicity larger than one. Hence, we can assume that the roots in $\mathcal{I}_p$ are distinct. Consequently,} 
the scale function can be written
\begin{align} \label{scale_function_simple}
\begin{split}
W^{(p)}(x) &= \left\{ \begin{array}{ll}\Phi'(p)  e^{\Phi(p) x} - \sum_{i \in \mathcal{I}_p} \kappa_{i,p}  e^{-\xi_{i,p}x}, & x \geq 0, \\
0, & x < 0,
\end{array} \right.
\end{split}
\end{align}
where
\begin{align}
\kappa_{i,p} &:= \left. \frac { s+\xi_{i,p}} {p-\psi(s)} \right|_{s = -\xi_{i,p}} = - \frac 1 {\psi'(-\xi_{i,p})};
\end{align}
see \cite{Egami_Yamazaki_2010_2}.  Here $\{ \xi_{i,p}; i \in \mathcal{I}_p \}$ and  $\{ \kappa_{i,p}; i \in \mathcal{I}_p \}$ are possibly complex-valued.
Hence the resolvent density \eqref{resolvent_density_general} is written
\begin{align}
\theta^{(p)}(z)  = \left\{ \begin{array}{ll}   \Phi'(p) e^{- \Phi(p) z}, & z > 0, \\ \sum_{i \in \mathcal{I}_p} \kappa_{i,p}  e^{\xi_{i,p}z}, &  z \leq 0. \end{array} \right.
\end{align}


Our objective here is to show that  the function $u^{(n,m)}$ for each $n \geq 1$ and $0 \leq m \leq M$ (derived recursively as in the previous section) is a piecewise function with subdomains {$(\widetilde{a}^*_{l},\widetilde{a}^*_{l-1})_{1 \leq l \leq n+1}$} (see Remark \ref{remark_monotonicity} regarding the monotonicity of $\widetilde{a}^*_n$) where we define, for notational convenience, $\widetilde{a}_{0}^* := \infty$ and $\widetilde{a}_{n+1}^* := -\infty$. More specifically, we shall show that,  on each subdomain, it is a sum of products of polynomials and exponentials:
\begin{align} \label{u_form}
u^{(n,m)}(x) = f^{(n,m,l)}(x)
\end{align}
for {$\widetilde{a}^*_{l} < x < \widetilde{a}^*_{l-1}$} where we define
\begin{align}
\begin{split}
f^{(n,m,l)}(y) &:= A^{(n,m,l)} + B^{(n,m,l)} e^y +  \sum_{i \in \mathcal{I}_p} \sum_{h=0}^{I_{n,m}} (C^{(n,m,l)}_{i,h} e^{- \xi_{i,p} y} y^h) \\ & \qquad +  \sum_{h=0}^{I_{n,m}} (D^{(n,m,l)}_h e^{\Phi(p) y} y^h) + E^{(n,m,l)} e^{\Phi(\alpha) y}, \quad 1 \leq l \leq n+1, \; y \in \R,
\end{split}
\end{align}
with $I_{n,m} := (n-1)M + m-1$.
Here the parameter set
\begin{align}
\Gamma_{n,m} := (A^{(n,m,l)}, B^{(n,m,l)}, \{ C^{(n,m,l)}_{i,h}, i \in \mathcal{I}_p, 0 \leq h \leq I_{n,m}  \}, \{ D^{(n,m,l)}_h, 0 \leq h \leq I_{n,m} \}, E^{(n,m,l)})_{1 \leq l \leq n+1}, \label{parameter_set}
\end{align}
satisfies
\begin{align} \label{hypothesis_corner}
\begin{split}
&D^{(n,m,1)} = E^{(n,m,1)} = 0,  \\
&A^{(n,m,n+1)} = B^{(n,m,n+1)}=C^{(n,m,n+1)}  = 0.
\end{split}
\end{align}

  Its proof and the derivation of the parameter set $\Gamma_{n,m}$ can be done inductively.
Along the same line as the arguments in the last section, we go through the backward induction. First, the base case ($n=1$ and $m=0$) is trivial because, in view of \eqref{u_1_0},
the function $u^{(1,0)}$ can be written as \eqref{u_form} by setting $A^{(1,0,1)} = -K$, $B^{(1,0,1)} = 1$, $E^{(1,0,2)} = \phi(\widetilde{a}^*_1) \exp (-\Phi(\alpha)\widetilde{a}^*_1)$ with $I_{n,m} = -1$.

In view of our discussion in the previous section, there are two types of inductive steps.  The first kind increments the step counter $n$ while the second kind increments $m$ by applying the integration with respect to the resolvent measure.  We shall call the former  Step I and the latter Step II.

\subsection{Inductive Step I}
We show that if the hypothesis holds for $n \geq 1$ and $m =M$, then it also holds for some $n+1$ and $m=0$.   By this hypothesis, the equations \eqref{recursion_u_n_0_part1} and \eqref{u_form} give, for {$\widetilde{a}^*_{l} < x < \widetilde{a}^*_{l-1}$},
\begin{multline}
\widetilde{\phi}^{(n+1)}(x) = (A^{(n,M,l)}-K) + (B^{(n,M,l)}+1) e^x +  \\ \sum_{i \in \mathcal{I}_p} \sum_{h=0}^{I_{n,M}} (C^{(n,M,l)}_{i,h} e^{- \xi_{i,p} x} x^h) +  \sum_{h=0}^{I_{n,M}} (D^{(n,M,l)}_h e^{\Phi(p) x} x^h) + E^{(n,M,l)} e^{\Phi(\alpha) x},
\end{multline}
and
\begin{multline}
\widetilde{\phi}^{(n+1)'}(x) = (B^{(n,M,l)}+1) e^x +  \sum_{i \in \mathcal{I}_p} \sum_{h=0}^{I_{n,M}} C^{(n,M,l)}_{i,h} (- \xi_{i,p} e^{- \xi_{i,p} x} x^h + h e^{- \xi_{i,p} x} x^{h-1}) \\ +  \sum_{h=0}^{I_{n,M}} D^{(n,M,l)}_h (\Phi(p) e^{\Phi(p) x} x^h + h e^{\Phi(p) x} x^{h-1}) + \Phi(\alpha) E^{(n,M,l)} e^{\Phi(\alpha) x}.
\end{multline}
By \eqref{recursion_u_n_0_part2}, we can identify the optimal threshold $\widetilde{a}^*_{n+1}$.

Now, in view of \eqref{recursion_u_n_0_part3}, the representation of $u^{(n+1,0)}$ can be obtained by setting
\begin{align}
E^{(n+1,0,n+2)} = \widetilde{\phi}^{(n+1)}(\widetilde{a}^*_{n+1}) e^{-\Phi(\alpha)\widetilde{a}^*_{n+1}},
\end{align} and $A^{(n+1,0,n+2)} = B^{(n+1,0,n+2)} =C^{(n+1,0,n+2)} =D^{(n+1,0,n+2)} =0$,
and for $1 \leq l \leq n+1$
\begin{align}
A^{(n+1,0,l)} = A^{(n,M,l)} -K, \; B^{(n+1,0,l)} = B^{(n,M,l)} + 1
\end{align}
and
\begin{align}
C^{(n+1,0,l)} = C^{(n,M,l)}, \; D^{(n+1,0,l)} = D^{(n,M,l)},\; E^{(n+1,0,l)} = E^{(n,M,l)}.
\end{align}
It can be confirmed that because, by assumption, $D^{(n,M,1)} = E^{(n,M,1)} = 0$,
we have $D^{(n+1,0,1)} = E^{(n+1,0,1)} = 0$.
\subsection{Inductive Step II}  It is now sufficient to show that, for fixed $n \geq 1$ and $0 \leq m \leq M-1$, the hypothesis for $m$ implies that for $m +1$.

For every $x \in \R$, with $1 \leq L (= L_x) \leq {n+1}$ be such that $\widetilde{a}_{L}^* < x < \widetilde{a}_{L-1}^*$,
\begin{align} \label{M_x_temporary}
\begin{split}
\frac {\mathcal{M}_x u^{(n,m)}} \lambda  &= 1_{\{L \geq 2\}}\sum_{1 \leq l \leq L-1} \int_{\widetilde{a}_l^*}^{\widetilde{a}_{l-1}^*} \Phi'(p) e^{\Phi(p) (x-y)} f^{(n,m,l)}(y) \diff y +  \int_x^{\widetilde{a}_{L-1}^*}  \Phi'(p) e^{\Phi(p) (x-y)} f^{(n,m,L)}(y) \diff y  \\&+  \int_{\widetilde{a}_{L}^*}^x    \sum_{i \in \mathcal{I}_p} \kappa_{i,p}  e^{-\xi_{i,p} (x-y)} f^{(n,m,L)}(y)  \diff y + 1_{\{L \leq n\}}\sum_{L+1 \leq l \leq n+1} \int_{\widetilde{a}_l^*}^{\widetilde{a}_{l-1}^*}  \sum_{i \in \mathcal{I}_p} \kappa_{i,p}  e^{-\xi_{i,p} (x-y)}f^{(n,m,l)}(y)  \diff y \\
&=1_{\{L \geq 2\}} e^{\Phi(p) x}  \Phi'(p) \sum_{1 \leq l \leq L-1} \varpi_{l}^{(n,m)}(\widetilde{a}_l^*,\widetilde{a}_{l-1}^*, \Phi(p))  +   e^{\Phi(p) x} \Phi'(p)  \varpi_{L}^{(n,m)}(x,\widetilde{a}_{L-1}^*, \Phi(p))   \\&+ \sum_{i \in \mathcal{I}_p}  e^{-\xi_{i,p} x} \kappa_{i,p} \varpi_{L}^{(n,m)}(\widetilde{a}_{L}^*,x, -\xi_{i,p})  +  1_{\{L \leq n\}} \sum_{i \in \mathcal{I}_p}  e^{-\xi_{i,p} x}  \kappa_{i,p}   \sum_{L+1 \leq l \leq n+1} \varpi_{l}^{(n,m)}(\widetilde{a}_{l}^*,\widetilde{a}_{l-1}^*, -\xi_{i,p}),
\end{split}
\end{align}
where
\begin{align}
\varpi_l^{(n,m)}(s, t, q) := \int_{s}^{t}  e^{-q  y}f^{(n,m,l)}(y)  \diff y, \quad s < t. \label{def_varpi_l_n_m}
\end{align}

In particular, for $x > \widetilde{a}_1^*$ (or $L = 1$),
\begin{align}
\begin{split}
\frac {\mathcal{M}_x u^{(n,m)} } \lambda
&= e^{\Phi(p) x} \Phi'(p)  \varpi_{1}^{(n,m)}(x,\infty, \Phi(p))   \\&+ \sum_{i \in \mathcal{I}_p}  e^{-\xi_{i,p} x} \kappa_{i,p} \varpi_{1}^{(n,m)}(\widetilde{a}_{1}^*,x, -\xi_{i,p})  +  \sum_{i \in \mathcal{I}_p}  e^{-\xi_{i,p} x}  \kappa_{i,p}   \sum_{2 \leq l \leq n+1} \varpi_{l}^{(n,m)}(\widetilde{a}_{l}^*,\widetilde{a}_{l-1}^*, -\xi_{i,p}),
\end{split}
\end{align}
while, for $x < \widetilde{a}_n^*$ (or $L = n+1$),
\begin{align}
\begin{split}
\frac {\mathcal{M}_x u^{(n,m)}} \lambda
&= e^{\Phi(p) x}  \Phi'(p) \sum_{1 \leq l \leq n} \varpi_{l}^{(n,m)}(\widetilde{a}_l^*,\widetilde{a}_{l-1}^*, \Phi(p))  +   e^{\Phi(p) x} \Phi'(p)  \varpi_{n+1}^{(n,m)}(x,\widetilde{a}_{n}^*, \Phi(p))   \\&+ \sum_{i \in \mathcal{I}_p}  e^{-\xi_{i,p} x} \kappa_{i,p} \varpi_{n+1}^{(n,m)}(-\infty,x, -\xi_{i,p}).
\end{split}
\end{align}

By repeatedly applying integration by parts, the right-hand side of \eqref{M_x_temporary} can be written in a closed form in the form \eqref{u_form}. {As the computation is straightforward but tedious, we defer the proof of the following proposition and the detailed expressions of the recursive formula \eqref{updating_formula_equation} and  the parameter set to the appendix. }
 
\begin{proposition} \label{proposition_recursion}
Fix $n \geq 1$ and $0 \leq m \leq M-1$ and assume  \eqref{u_form}, \eqref{parameter_set} and \eqref{hypothesis_corner} hold.  Then, with  {$L=L_x$ being the unique integer such that $\tilde{a}_L^*<x<\tilde{a}_{L-1}^*$. We}  
have
\begin{multline} \label{updating_formula_equation}
\frac {u^{(n,m+1)}(x)} \lambda = \frac {\mathcal{M}_x u^{(n,m)}} \lambda   = A^{(n,m+1,L)} + B^{(n,m+1,L)} e^x  \\ +  \sum_{i \in \mathcal{I}_p} \sum_{h=0}^{I_{n,m+1}} ( C^{(n,m+1,L)}_{i,h} e^{- \xi_{i,p} x} x^h) +  \sum_{h=0}^{I_{n,m+1}} (D^{(n,m+1,L)}_h e^{\Phi(p) x} x^h) + E^{(n,m+1,L)} e^{\Phi(\alpha) x},
\end{multline}
for some parameter set \begin{align}
\Gamma_{n,m+1} := \big[ A^{(n,m+1,l)}, B^{(n,m+1,l)}, \{ C^{(n,m+1,l)}_{i,h}, i \in \mathcal{I}_p, h \geq 0 \}, \{ D^{(n,m+1,l)}_h, h \geq 0 \}, E^{(n,m+1,l)} \big]_{1 \leq l \leq n+1}
\end{align}
that  satisfies \eqref{hypothesis_corner}.
\end{proposition}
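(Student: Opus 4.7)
The plan is to prove Proposition \ref{proposition_recursion} by direct term-by-term integration in the decomposition \eqref{M_x_temporary}, and to verify that the result is still in the five-basis form of $f^{(n,m+1,L)}$. Under the inductive hypothesis, each $f^{(n,m,l)}$ is a sum of five atomic types: constants, $e^y$, $e^{-\xi_{i,p}y}y^h$, $e^{\Phi(p)y}y^h$, and $e^{\Phi(\alpha)y}$. The strategy is to compute, once and for all, the antiderivative of $e^{-qy}g(y)$ for each atomic piece $g$ against each outer rate $q\in\{\Phi(p),-\xi_{j,p}\}$ that occurs in \eqref{M_x_temporary}, and then reassemble.

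First, I would single out the two \emph{resonant} pairings, namely $(q,g) = (\Phi(p),e^{\Phi(p)y}y^h)$ and $(q,g) = (-\xi_{j,p},e^{-\xi_{j,p}y}y^h)$. In these pairings the exponentials cancel and the antiderivative $\int y^h \diff y = y^{h+1}/(h+1)$ raises the polynomial degree by one; this is precisely what accounts for the relation $I_{n,m+1}=I_{n,m}+1$. All other pairings yield, via repeated integration by parts, a polynomial of the \emph{same} degree multiplied by a single exponential, and hence remain inside the basis.

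Second, I would regroup the resulting expressions after multiplying by the outer factors $e^{\Phi(p)x}$ and $e^{-\xi_{i,p}x}$ displayed in \eqref{M_x_temporary}. The $x$-endpoint evaluations of the two middle $\varpi_L^{(n,m)}$ integrals cancel exactly one outer exponential and thereby feed the lower-rate basis functions $1$, $e^x$ and $e^{\Phi(\alpha)x}$, together with the resonant polynomial terms $x^{h+1}e^{\Phi(p)x}$ and $x^{h+1}e^{-\xi_{i,p}x}$. By contrast, the threshold-endpoint evaluations at $\widetilde{a}_l^*$ produce only constants multiplying $e^{\Phi(p)x}$ or $e^{-\xi_{i,p}x}$. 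Collecting all of these contributions gives explicit formulas for every entry of $\Gamma_{n,m+1}$.

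Third, I would verify the corner conditions \eqref{hypothesis_corner}. For $L=1$, convergence of $\varpi_1^{(n,m)}(x,\infty,\Phi(p))$ at $+\infty$ uses the inductive assumption $D^{(n,m,1)} = E^{(n,m,1)} = 0$ together with $\Phi(p)>1$ (which follows from $\psi(1)\le\alpha<p$ under Assumption \ref{assump_alpha} and strict convexity of $\psi$); this same vanishing rules out the resonant contributions that would otherwise generate $x^he^{\Phi(p)x}$ or $e^{\Phi(\alpha)x}$ terms in $f^{(n,m+1,1)}$, so $D^{(n,m+1,1)} = E^{(n,m+1,1)} = 0$. The argument for $L=n+1$ is symmetric: the hypothesis $A^{(n,m,n+1)}=B^{(n,m,n+1)}=C^{(n,m,n+1)}=0$ ensures convergence of $\varpi_{n+1}^{(n,m)}(-\infty,x,-\xi_{i,p})$ at $-\infty$ and excludes contributions of types $1$, $e^x$ and $x^he^{-\xi_{i,p}x}$. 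The main obstacle is purely notational: the explicit formulas for the entries of $\Gamma_{n,m+1}$ require careful bookkeeping across five basis types, four integral pieces of \eqref{M_x_temporary}, and polynomial degrees up to $I_{n,m}$, so the complete recursion is tedious to record — which is exactly why the detailed expressions are relegated to the appendix as the authors propose.
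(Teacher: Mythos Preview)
Your proposal is correct and follows essentially the same approach as the paper's appendix: evaluate each $\varpi_l^{(n,m)}$ in \eqref{M_x_temporary} term by term against $q=\Phi(p)$ and $q=-\xi_{i,p}$, isolate the two resonant pairings that bump the polynomial degree (yielding $I_{n,m+1}=I_{n,m}+1$), multiply back by the outer exponentials, and read off $\Gamma_{n,m+1}$ with the corner conditions \eqref{hypothesis_corner} following from the inductive vanishing of $D^{(n,m,1)},E^{(n,m,1)}$ and $A^{(n,m,n+1)},B^{(n,m,n+1)},C^{(n,m,n+1)}$. Your identification of $\Phi(p)>1$ as the integrability condition at $+\infty$ and your explanation of why the $D,E$ (resp.\ $A,B,C$) terms cannot be regenerated at $L=1$ (resp.\ $L=n+1$) match the paper's structure exactly.
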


\section{Numerical Results} \label{section_numerical_results}
 In this section, we evaluate our method numerically.  For $X$, we use a spectrally negative \lev process of the form \eqref{form_levy_compound_poisson} with a modification so that $(Z_n)$ is independent, identically distributed sequence of the following random variables:
\begin{description}
\item[Case 1] Exponential random variable with parameter $1$,
\item[Case 2] Weibull random variable with parameter  $(2,1)$,
\item[Case 3] The absolute value of the Gaussian (folded normal) with mean zero and variance $1$,
\end{description}
whose respective densities are
\begin{align}
\exp\{-x\}, \quad 2 x \exp \left\{ -  x^2\right\} \quad \textrm{and} \quad \frac 2 {\sqrt{2 \pi}} \exp \left\{ - \frac {x^2} {2} \right\}, \quad x\in(0,\infty).
\end{align}

\textbf{Case 1} is a special case of a phase-type random variable and its scale function can be computed exactly in the form \eqref{scale_function_simple}.  For \textbf{Cases 2} and \textbf{3}, the EM-algorithm is applied to approximate them by phase-type distributions for $d=6$: the fitted phase-type distributions are
\begin{align}
&{\bm T} = \left[ \begin{array}{rrrrrr}   -5.6546  &  0.0000   &      0.0000  &  0.0000  &  0.0000  &  0.0000 \\
    0.6066  & -5.6847 &   0.0000  &  0.0166  &  0.0089  &  5.0526 \\
    0.2156  &  4.3616  & -5.6485  &  0.9162 &   0.1424 &   0.0126 \\
    5.6247 &   0.0000   & 0.0000  & -5.6786  &  0.0000  &  0.0000 \\
    0.0107  &  0.0000  &  0.0000 &   5.7247 &  -5.7420   & 0.0000 \\
    0.0136  &  0.0000 &   0.0000  &  0.0024 &   5.7022 &  -5.7183 \end{array} \right],  \quad {\bm \alpha} =    \left[ \begin{array}{l}
    0.0000 \\
    0.0007 \\
    0.9961 \\
    0.0000 \\
    0.0001 \\
    0.0031  \end{array} \right],
\end{align}
and
\begin{align}
&{\bm T} = \left[ \begin{array}{rrrrrr}-4.0488  &  0.0000   & 0.0000  &  0.0000 &   0.0000 &   0.0000 \\
    0.1320  & -4.0012 &  0.0000  &  0.0455 &   3.7040  & 0.0044 \\
    0.2367  &  0.8595   &-4.2831  &  0.1897   & 0.2918   & 2.3724 \\
    3.1532   & 0.0000   & 0.0000 &  -4.0229  &  0.0000  &  0.0000 \\
    0.2497  &  0.0000  &  0.0000  &  3.7024  & -4.0124  &  0.0000 \\
    0.0434   & 2.1947  &  0.0938  &  0.1704  &  0.1217 &  -4.9612 \end{array} \right],  \quad {\bm \alpha} =    \left[ \begin{array}{r}   0.0052 \\
    0.0659 \\
    0.7446 \\
    0.0398 \\
    0.0043 \\
    0.1403  \end{array} \right],
\end{align}
for \textbf{Cases 2} and \textbf{3}, respectively.  For this phase-type fitting, we use EMpht which is written in C and is publicly available\footnote{Available at http://home.imf.au.dk/asmus/pspapers.html as of March 14, 2014.}. For more details of this method, we refer to \cite{Asmussen_1996}.  {Here, we choose $d = 6$ because, as has been confirmed in  \cite{Egami_Yamazaki_2010_2}, the fitting can be conducted accurately and quickly.
While accuracy tends to increase in $d$, run time increases nonlinearly; hence $d$ cannot be chosen arbitrarily large. As we shall see below, our choice of $d$ attains very small fitting errors.}

For numerical illustration, we set the negative discount rate $\alpha = -0.02$ along with $K = 100$.
For each case of $Z$, we consider the \lev process with common parameters $\rho = 1.5$ and $\sigma = 0.2$ and
choose $\widetilde{c}$ so that $\psi(1) = \alpha  -\gamma$ (i.e. $\exp(-(\alpha - \gamma) t+X_t)_{t \geq 0}$ is a martingale) for our choice of $\gamma$.
Notice that $\widetilde{c}$ (and hence $\E X_1$ as well) decreases as $\gamma$ increases. In the context of stock loans, as discussed in  \cite{Cai2014,xiazhou}, the negative discount rate is the difference of the risk-free rate and the loan rate,    $K$ is the loan amount, and $\gamma$ is the dividend rate.  All the  numerical results given below are generated by MATLAB scripts with double precision on a Windows 7 computer with  an Intel Xeon CPU E$5-2620$, $2.00$GHz, 24.0GB RAM. 
\begin{table}[htbp]
\begin{center}
\begin{tabular}{|c|rcr|rcr|rcr|rcr|}
\hline
 & \multicolumn{6}{c|}{$\gamma = 0.02$} & \multicolumn{6}{c|}{$\gamma = 0.1$}\\
\hline 
 &    \multicolumn{3}{c|}{$M=1$} &  \multicolumn{3}{c|}{$M=3$}&  \multicolumn{3}{c|}{$M=1$} &  \multicolumn{3}{c|}{$M=3$}  \\
\hline
$\xi_{1,p}$   &1.0252&+& 0.0000i&1.5941 &+& 0.0000i&1.0056 &+& 0.0000i & 1.5825 &+& 0.0000i \\
$\xi_{2,p}$   &3.8602 &+& 3.6058i&3.9134 &+& 3.3255i& 3.8296 &+& 3.6319i & 3.8939 &+& 3.3384i\\
$\xi_{3,p}$  & 3.8602 &-& 3.6058i &  3.9134 &-& 3.3255i& 3.8296 &-& 3.6319i&  3.8939 &-& 3.3384i\\
$\xi_{4,p}$  &7.8211 &+& 3.4389i&  7.6518 &+& 3.2454i&7.8398 &+& 3.4933i &7.6613 &+& 3.2799i\\
$\xi_{5,p}$   & 7.8211 &-& 3.4389i&7.6518 &-& 3.2454i&7.8398 &-& 3.4933i& 7.6613 &-& 3.2799i\\
$\xi_{6,p}$  & 9.5837 &+& 0.0000i&9.3632 &+& 0.0000i&9.6386 &+& 0.0000i &9.3983 &+& 0.0000i \\
$\xi_{7,p}$  & 42.040 &+& 0.0000i& 46.026 &+& 0.0000i&  38.4292 &+& 0.0000i&  42.666 &+& 0.0000i\\
  \hline
\end{tabular} \\ 
\textbf{Case2:} Weibull \\
\begin{tabular}{|c|rcr|rcr|rcr|rcr|}
\hline
 & \multicolumn{6}{c|}{$\gamma = 0.02$} & \multicolumn{6}{c|}{$\gamma = 0.1$}\\
\hline 
 &    \multicolumn{3}{c|}{$M=1$} &  \multicolumn{3}{c|}{$M=3$}&  \multicolumn{3}{c|}{$M=1$} &  \multicolumn{3}{c|}{$M=3$}  \\
\hline
$\xi_{1,p}$   &0.9842 &+& 0.0000i&1.4669 &+& 0.0000i&0.9674 &+& 0.0000i& 1.4583 &+& 0.0000i \\
$\xi_{2,p}$   & 3.2497 &+& 2.3023i&3.2876 &+& 2.0887i& 3.2331 &+& 2.3200i & 3.2784 &+& 2.0976i\\
$\xi_{3,p}$  & 3.2497 &-& 2.3023i & 3.2876 &-& 2.0887i&   3.2331 &-& 2.3200i&  3.2784 &-& 2.0976i\\
$\xi_{4,p}$  &5.5298 &+& 1.6297i&   5.4233 &+& 1.5437i&  5.5425 &+& 1.6464i & 5.4300 &+& 1.5543i\\
$\xi_{5,p}$   &   5.5298 &-& 1.6297i&5.4233 &-& 1.5437i& 5.5425 &-& 1.6464i&  5.4300 &-& 1.5543i\\
$\xi_{6,p}$  & 6.4520 &+& 0.0000i&6.2947 &+& 0.0000i&6.4805 &+& 0.0000i &6.3103 &+& 0.0000i \\
$\xi_{7,p}$  &  37.565 &+& 0.0000i& 41.862 &+& 0.0000i&   34.049 &+& 0.0000i& 
 38.617 &+& 0.0000i\\
  \hline
\end{tabular} \\ 
\textbf{Case 3:} Folded Normal
\end{center}
\caption{{Values of $\xi_{i,p}$  for $M=1,3$ and $\gamma = 0.02, 0.1$ (listed in ascending order). We can confirm that these values are all distinct. Because $X$ has a Brownian motion component, $|\mathcal{I}_p| = d + 1 = 7.$}} \label{table_xi}
\end{table}

\begin{table}[htbp]
\begin{tabular}{c||c|c||c|c|}
      & \multicolumn{2}{c||}{randomization}  & \multicolumn{2}{c|}{simulation}  \\ \hline
$M$              & value & time  & value & time  \\
\hline
$1$&$1823.65$&$0.306+0.008$&$1823.89(1821.61,1826.17)$&$146.712$\\
$2$&$1824.27$&$0.242+0.024$&$1824.15(1822.03,1826.28)$&$152.969$\\
$3$&$1824.51$&$0.245+0.066$&$1824.58(1822.51,1826.66)$&$157.001$\\
$4$&$1824.64$&$0.240+0.146$&$1823.69(1821.71,1825.68)$&$162.082$\\
$5$&$1824.72$&$0.238+0.271$&$1825.10(1823.00,1827.19)$&$167.068$\\
\hline
$10$&$1824.88$&$0.273+2.008$&$1823.11(1821.01,1825.20)$&$186.893$\\
\hline
const&\multicolumn{2}{c||}{N/A}&$1823.90(1821.80,1826.00)$&$141.833$
\end{tabular} \\ \vspace{0.3cm}
\textbf{Case 1}: Exponential \\ \vspace{0.5cm}
\begin{tabular}{c||c|c||c|c|}
      & \multicolumn{2}{c||}{randomization}  & \multicolumn{2}{c|}{simulation}  \\ \hline
$M$              & value & time  & value & time  \\
\hline
$1$&$1665.62$&$0.604+0.036$&$1665.68(1663.64,1667.73)$&$395.063$\\
$2$&$1666.12$&$0.455+0.201$&$1663.54(1661.63,1665.44)$&$393.945$\\
$3$&$1666.32$&$0.374+0.599$&$1666.02(1664.07,1667.97)$&$404.590$\\
$4$&$1666.42$&$0.524+1.350$&$1664.73(1662.84,1666.62)$&$403.772$\\
$5$&$1666.49$&$0.534+2.520$&$1665.47(1663.59,1667.34)$&$411.322$\\
\hline
$10$&$1666.58$&$0.403+18.74$&$1667.07(1665.08,1669.06)$&$426.856$\\
\hline
const&\multicolumn{2}{c||}{N/A}&$1666.61(1664.75,1668.47)$&$386.804$
\end{tabular} \\ \vspace{0.3cm}
\textbf{Case 2}: Weibull \\ \vspace{0.5cm}
\begin{tabular}{c||c|c||c|c|}
      & \multicolumn{2}{c||}{randomization}  & \multicolumn{2}{c|}{simulation}  \\ \hline
$M$              & value & time  & value & time  \\
\hline
$1$&$1482.88$&$0.594+0.036$&$1486.05(1484.40,1487.69)$&$141.351$\\
$2$&$1483.35$&$0.380+0.232$&$1484.31(1482.76,1485.86)$&$147.091$\\
$3$&$1483.53$&$0.389+0.584$&$1484.30(1482.72,1485.89)$&$152.59$\\
$4$&$1483.63$&$0.379+1.302$&$1484.06(1482.45,1485.67)$&$156.25$\\
$5$&$1483.69$&$0.382+2.471$&$1485.29(1483.69,1486.89)$&$161.923$\\
\hline
$10$&$1483.80$&$0.329+18.64$&$1485.24(1483.67,1486.81)$&$184.049$ \\
\hline
const&\multicolumn{2}{c||}{N/A}&$1485.35(1483.78,1486.91)$&$137.375$

\end{tabular} \\ \vspace{0.3cm}
\textbf{Case 3}: Folded Normal

\caption{Comparison between results under randomization and simulation for $\gamma = 0.02$. {The comparison is done for each Erlang shape parameter $M$; in the bottom row (labeled const), the approximated values under simulation for the constant $\delta = 0.5$ case are given.  The listed values under simulation are the mean and $95\%$ confidence interval. The computation times  (in seconds) for randomization are given as a sum of the time spent for steps (i) and (ii) . }} \label{table_randomization_simulation_002}
\end{table}

\begin{table}[htbp]
\begin{tabular}{c||c|c||c|c|}
      & \multicolumn{2}{c||}{randomization}  & \multicolumn{2}{c|}{simulation}  \\ \hline
$M$              & value & time  & value & time  \\
\hline
$1$&$323.83$&$0.360+0.008$&$323.85(323.44,324.27)$&$146.511$\\
$2$&$324.33$&$0.235+0.023$&$324.10(323.69,324.51)$&$150.866$\\
$3$&$324.54$&$0.231+0.064$&$324.23(323.82,324.64)$&$157.942$\\
$4$&$324.65$&$0.235+0.143$&$324.13(323.74,324.53)$&$162.298$\\
$5$&$324.72$&$0.237+0.273$&$324.51(324.10,324.91)$&$168.973$\\
\hline
$10$&$324.87$&$0.271+2.013$&$324.48(324.06,324.90)$&$184.726$\\
\hline
const&\multicolumn{2}{c||}{N/A}&$324.97(324.56,325.37)$&$142.774$
\end{tabular} \\ \vspace{0.3cm}
\textbf{Case 1}: Exponential \\ \vspace{0.5cm}
\begin{tabular}{c||c|c||c|c|}
      & \multicolumn{2}{c||}{randomization}  & \multicolumn{2}{c|}{simulation}  \\ \hline
$M$              & value & time  & value & time  \\
\hline
$1$&$303.13$&$0.307+0.036$&$302.36(301.94,302.78)$&$389.083$\\
$2$&$303.54$&$0.271+0.202$&$303.44(303.04,303.85)$&$396.285$\\
$3$&$303.72$&$0.588+0.679$&$303.63(303.23,304.03)$&$401.650$\\
$4$&$303.81$&$0.377+1.458$&$303.48(303.09,303.86)$&$402.040$\\
$5$&$303.87$&$0.401+2.504$&$303.65(303.24,304.06)$&$404.430$\\
\hline
$10$&$304.00$&$0.335+18.69$&$303.87(303.48,304.25)$&$428.611$\\
\hline
const&\multicolumn{2}{c||}{N/A}&$303.98(303.60,304.37)$&$385.145$
\end{tabular} \\ \vspace{0.3cm}
\textbf{Case 2}: Weibull \\ \vspace{0.5cm}
\begin{tabular}{c||c|c||c|c|}
      & \multicolumn{2}{c||}{randomization}  & \multicolumn{2}{c|}{simulation}  \\ \hline
$M$              & value & time  & value & time  \\
\hline
$1$&$265.46$&$0.876+0.037$&$265.67(265.33,266.00)$&$144.601$\\
$2$&$265.85$&$0.380+0.235$&$265.90(265.57,266.23)$&$150.759$\\
$3$&$266.01$&$0.386+0.583$&$266.37(266.04,266.70)$&$155.104$\\
$4$&$266.10$&$0.341+1.311$&$266.49(266.17,266.81)$&$158.389$\\
$5$&$266.15$&$0.388+2.459$&$266.69(266.37,267.01)$&$162.124$\\
\hline
$10$&$266.28$&$0.336+18.75$&$266.50(266.18,266.81)$&$182.452$\\
\hline
const&\multicolumn{2}{c||}{N/A}&$266.86(266.55,267.17)$&$139.149$
\end{tabular} \\ \vspace{0.3cm}
\textbf{Case 3}: Folded Normal
\caption{Comparison between results under randomization and simulation for $\gamma = 0.1$.} \label{table_randomization_simulation_01}
\end{table}

\subsection{One-stage randomization}
We first analyze the accuracy and computation time of our randomization algorithm by considering the expectation, for $\delta = 0.5$,
\begin{align}
\E_x [e^{-\alpha \delta } v^{(1)} (X_\delta)], \label{def_expectation_delta}
\end{align}
where $v^{(1)}$ is analytically given as in \eqref{u_1_0}.  In order to do so, we evaluate the approximations by our algorithm in comparison to the simulated results.  More specifically, we first compute, for $M=1, \ldots, 5,10$,  the approximations by these two methods to
\begin{align}
u^{(1,M)}(x) = \E_x [e^{-\alpha \eta(M, M/\delta) } v^{(1)} (X_{\eta(M, M/\delta)})], \label{def_expectation_delta_erlang}
\end{align}
and then approximate the constant $\delta$ case \eqref{def_expectation_delta} by simulation with a starting point $x= \widetilde{a}_1^*$.  This enables us to analyze the approximation errors of our randomization algorithm for the Erlang case \eqref{def_expectation_delta_erlang}, and also analyze how large $M$ needs to be to acquire accurate approximations for the constant $\delta$ case \eqref{def_expectation_delta}.

Following the arguments in the previous section, our computation involves   two main steps: (i) computing the roots of  $\psi(\cdot)=p$, and (ii) computing recursively the parameter set $\Gamma$  in \eqref{parameter_set}.  The root-finding procedure  is conducted   by MATLAB built-in function \texttt{solve()}.  {In Table \ref{table_xi}, we give sample values of $\xi_{i,p}$; here we can confirm that these values are all distinct.} The step (ii) can be done efficiently by applying, for $M$ times, Inductive Step II in the previous section.  As for the simulated results, we compute this via Monte Carlo simulation based on $1$ million sample paths, where the  Brownian motions are approximated by random walks with time step $\Delta t = \hat{T}/ 100$ for each inter-arrival time $\hat{T}$ between jumps.

Tables \ref{table_randomization_simulation_002} and \ref{table_randomization_simulation_01} summarize the results for $\gamma = 0.02$ and $0.1$, respectively.  The functions $u^{(1,M)}$'s, as in \eqref{def_expectation_delta_erlang},  obtained from  the analytic recursive formula and simulation   are listed for $M=1, \ldots, 5, 10$, along with the constant $\delta$ case   computed by simulation presented in the bottom row.  We also report the   computation times (in seconds). The times that correspond to  the analytic formula are given as a sum of the time spent for steps (i) and (ii).  For the values under simulation, we give  the  mean and $95\%$ confidence interval for each case.

The simulated results are subject to some errors arising from the  discretization of Brownian motions, but they are useful as a benchmark.   These discretization errors are confirmed to be minimal in view of the comparison between these two methods for \textbf{Case 1}. Recall that the numerical results for  \textbf{Case 1} by the randomization algorithm are  \emph{exact} in the sense that there is no approximation error from  fitting the scale function.   Based on this observation, we can also infer from the results on \textbf{Cases 2} and \textbf{3} that the associated fitting errors of the scale function are also minimal.  This suggests the practicability of the use of the phase-type distribution as an approximation for a general \lev process.

 As $M$ increases from $1$ to $10$,  the approximate value function \eqref{def_expectation_delta_erlang}  increases monotonically  and  approaches the simulated value for \eqref{def_expectation_delta} associated with the constant $\delta$ case.  In fact,  the exponential refraction time case (i.e.\ $M=1$) already gives a reasonable approximation.

In terms of the computation time, the randomization method is significantly faster   than  simulation. Note also that, for the randomization method, this computation is required only once to obtain the whole shape of the value function.  The simulation method, on the other hand, is unfortunately not practical; it takes several minutes to attain this accuracy for a particular point of $x$.  Recall that in our multiple stopping problem, we need to know the whole shape to conduct backward induction. If the simulation method is applied, one needs to compute for arbitrarily large number of starting points $x$. However, this is computationally infeasible.

While the randomization method runs instantaneously when $M$ is small, we observe that the computation time increases nonlinearly in $M$.  It also depends on the number of phases; \textbf{Case 1} (with $1$ phase) runs faster than \textbf{Cases 2} and \textbf{3} (with $6$ phases).  This suggests one limitation of the randomization algorithm that the value of $M$ and the number of phases $d$ cannot be chosen arbitrarily large.  However, as we already see in Tables \ref{table_randomization_simulation_002}  and \ref{table_randomization_simulation_01}, the approximate value function stabilizes  even for small  $M$ {and our choice of $d$}.

\subsection{Multiple-stage case}  We now move on to the multiple-stage case.
Using our randomization algorithm, the approximate value functions $\widetilde{v}^{(1)}, \ldots, \widetilde{v}^{(5)}$ are computed for Erlang shape parameters $M =1,3$ and are shown in  Figures \ref{figure_multiple_002} and \ref{figure_multiple_01}, respectively, for $\gamma = 0.02$ and $0.1$. The threshold levels  $\widetilde{a}_1^*, \ldots, \widetilde{a}_5^*$ (circles) are marked on the approximate value function curves.   In particular, the top curve corresponds to the approximate value function $\widetilde{v}^{(5)}$.   As expected, the thresholds are all above the strike $K=100$ and they admit the ordering $\widetilde{a}_{n+1}^* < \widetilde{a}_{n}^*$.  This is consistent with Remark \ref{remark_monotonicity}.

Recall that  the process $\exp(-(\alpha - \gamma)t + X_t )_{t \geq 0}$ is a martingale under the given parameters.  Hence, for a small value of $\gamma$, the value function is close to linear in $\exp(x)$. On the other hand, as $\gamma$ increases, it appears to be more convex.   Moreover, the function $\widetilde{v}^{(N)}$ decreases as $\gamma$ increases because $\gamma$ reduces the drift of  $X$.  As in the single stopping  case, the difference between the value functions for $M=$ 1 and 3  is close to invisible. This suggests that these are reasonable approximations for the constant $\delta$ case.   On the other hand,  the  optimal threshold levels show {non-negligible} difference between the cases $M=$ 1 and 3.

\begin{figure}[htbp]
\begin{center}
\begin{minipage}{1.0\textwidth}
\centering
\begin{tabular}{cc}
 \includegraphics[scale=0.55]{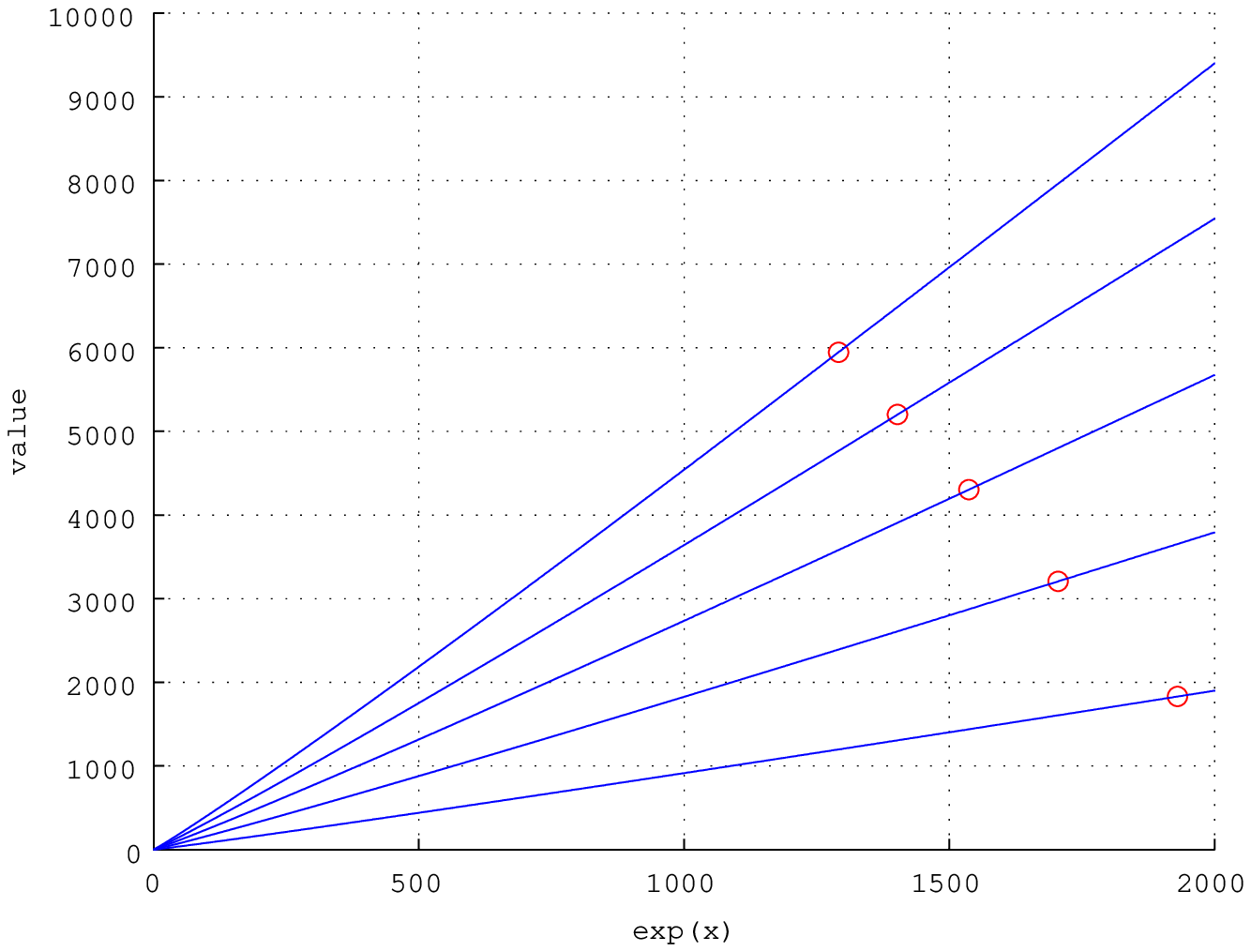} & \includegraphics[scale=0.55]{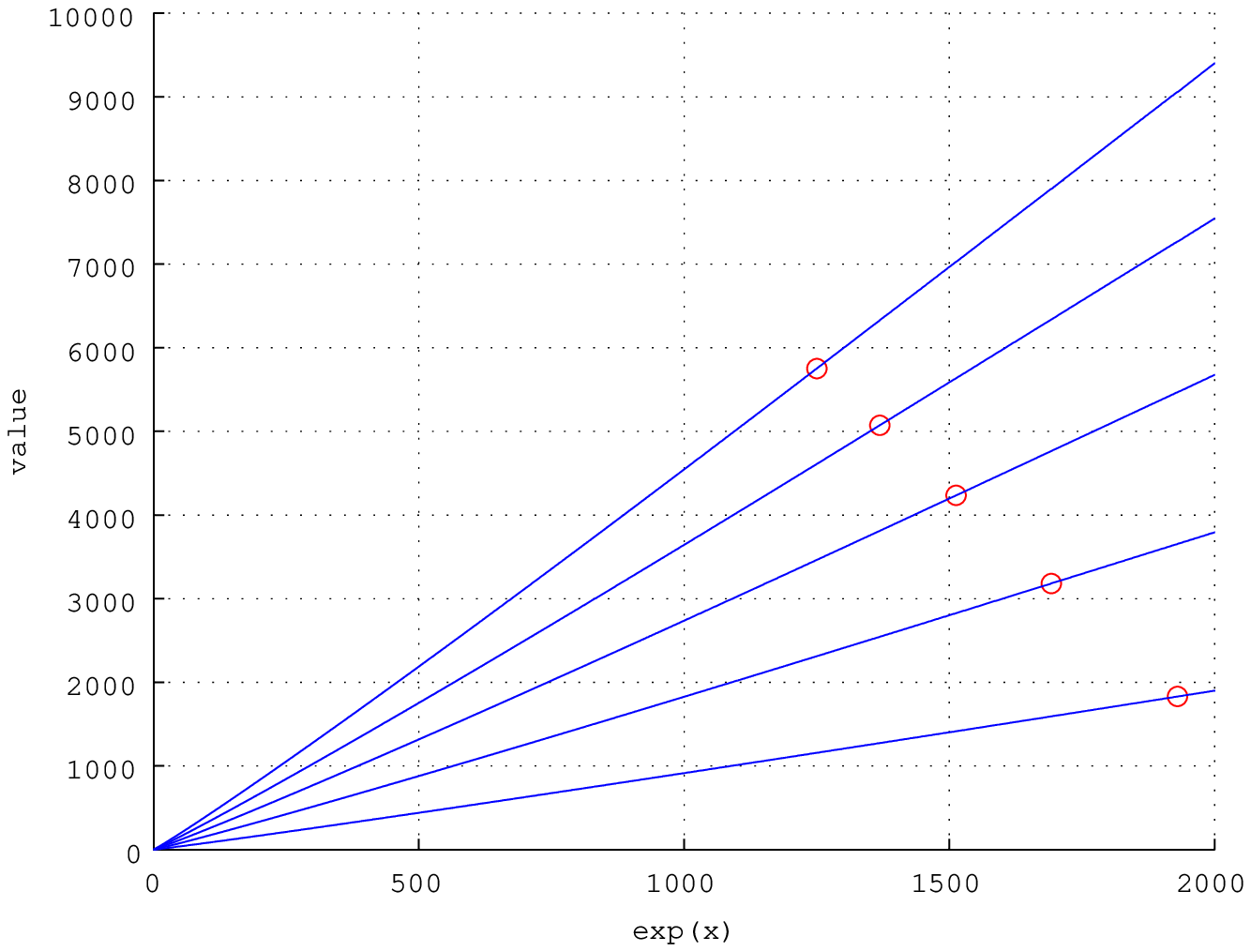}  \\
\textbf{Case 1} (Exponential) with $M=1$ & \textbf{Case 1} (Exponential)   with $M=3$ \vspace{0.3cm} \\
  \includegraphics[scale=0.55]{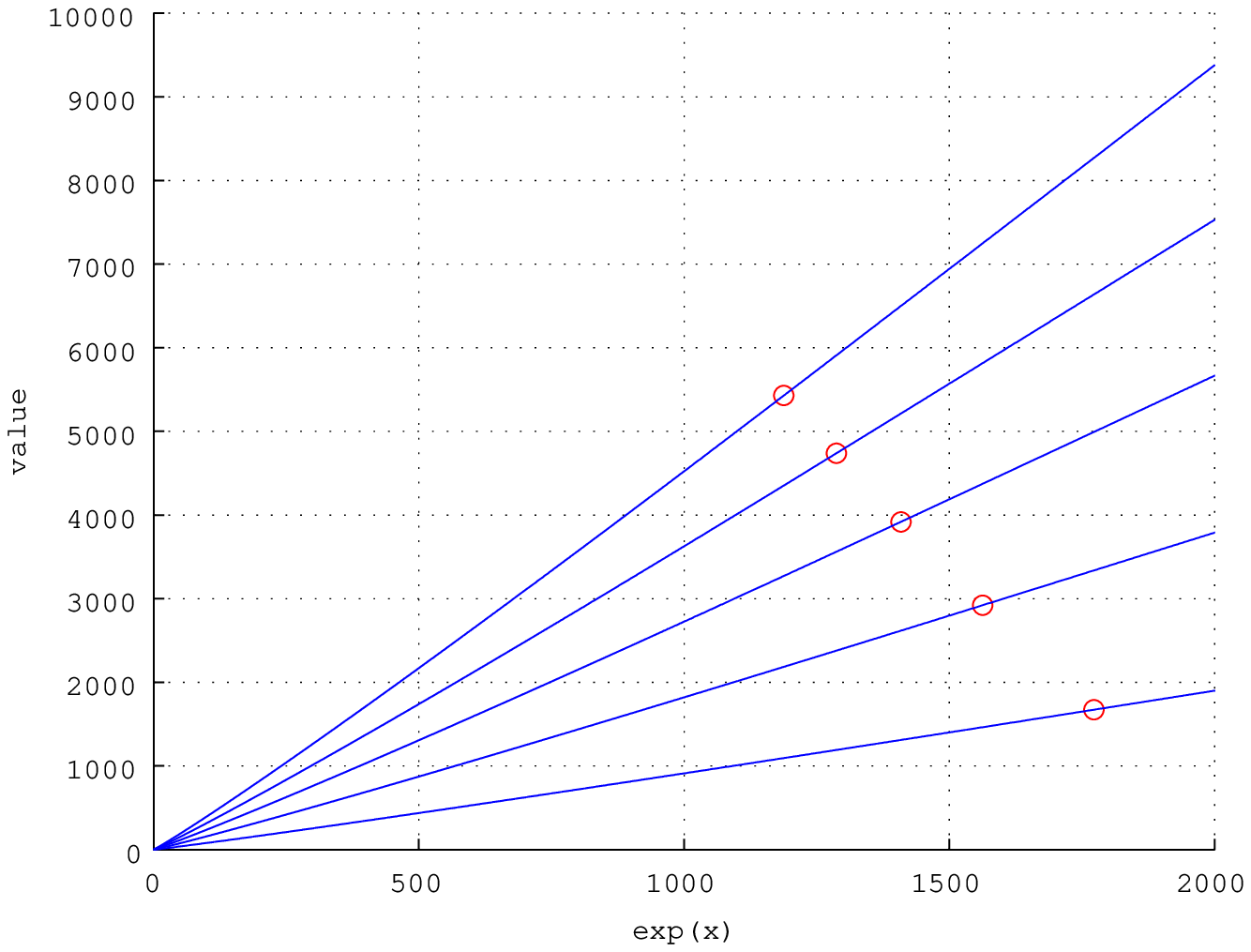} & \includegraphics[scale=0.55]{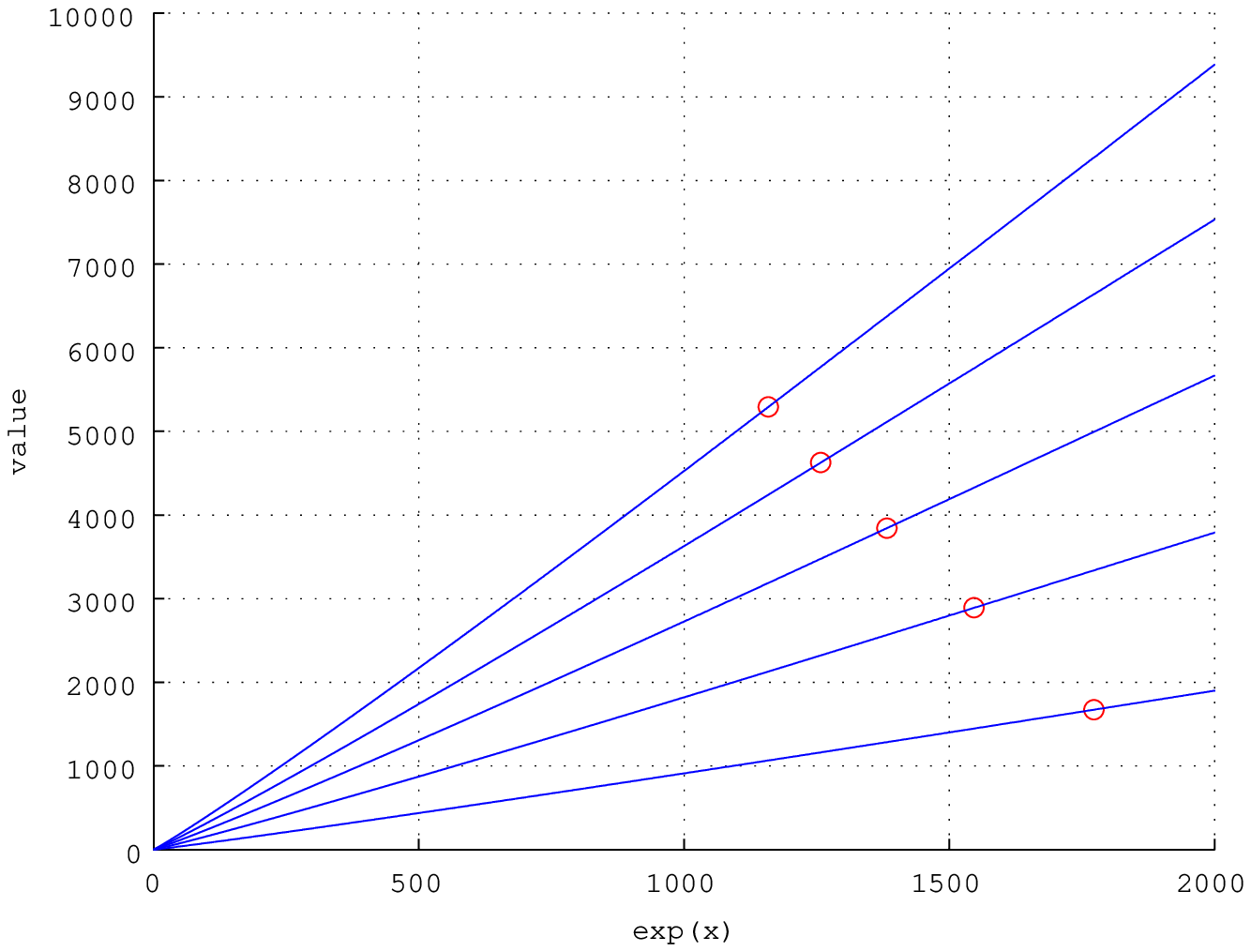}  \\
\textbf{Case 2} (Weibull)  with $M=1$ &  \textbf{Case 2} (Weibull) with $M=3$ \vspace{0.3cm} \\
   \includegraphics[scale=0.55]{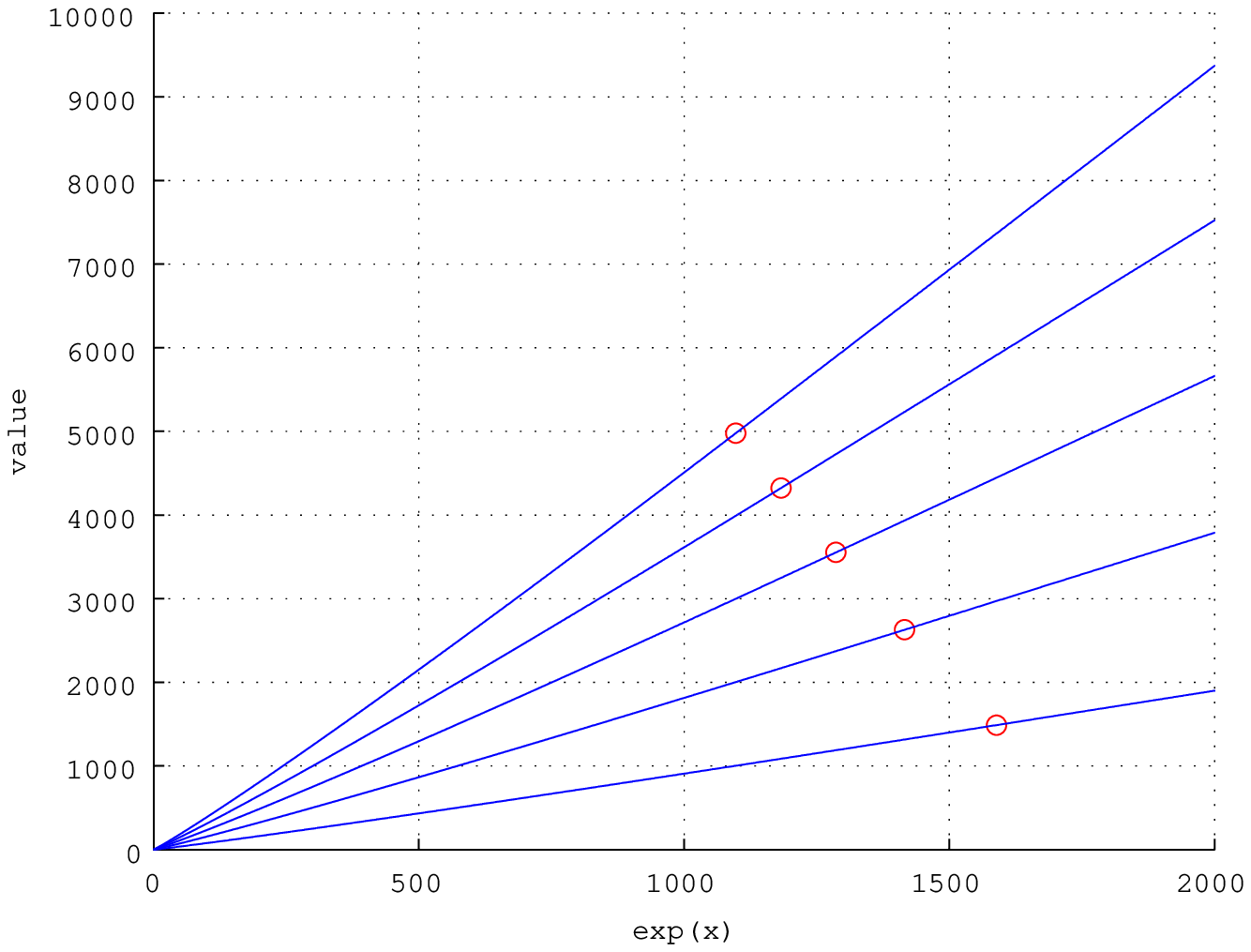} & \includegraphics[scale=0.55]{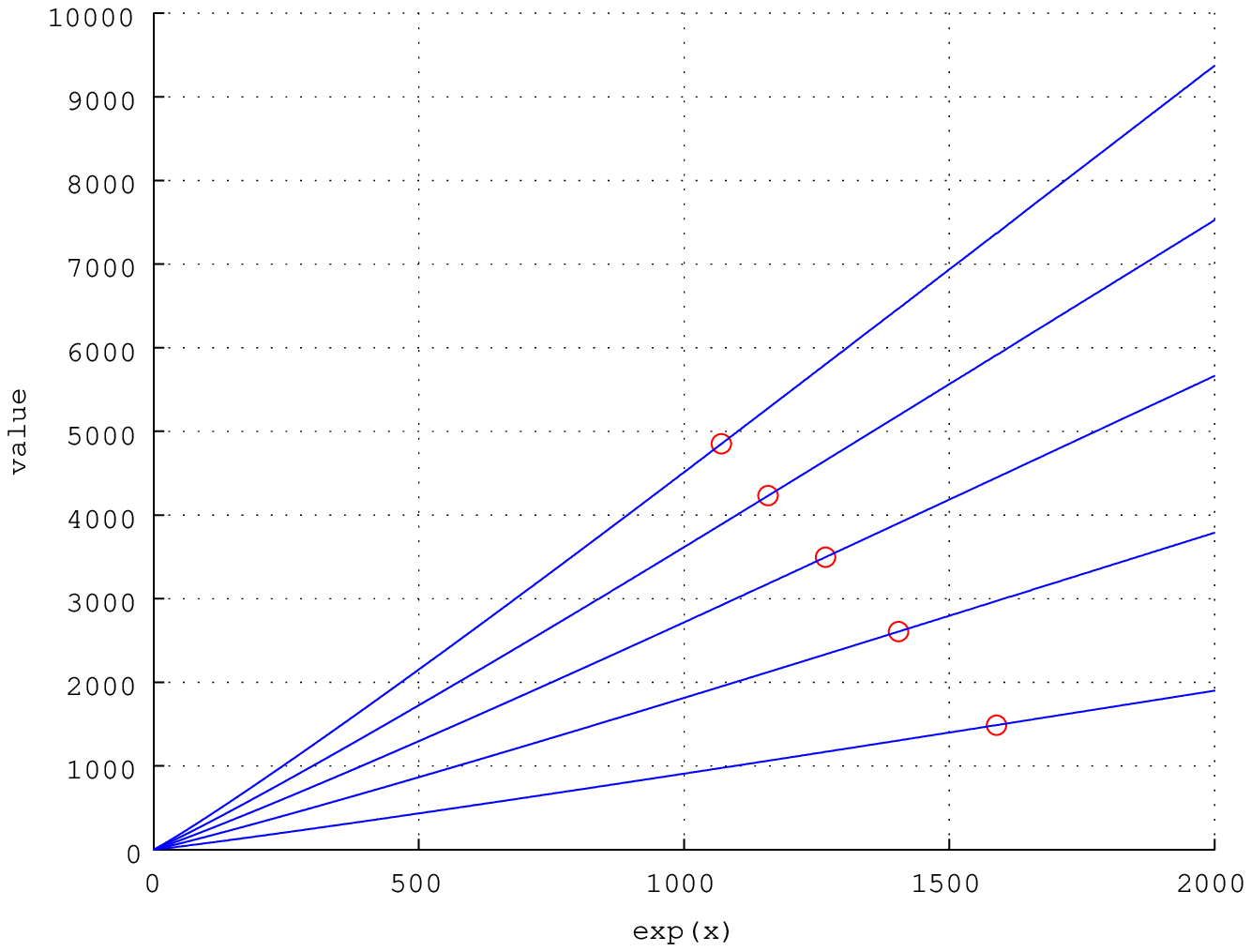}  \\
\textbf{Case 3} (Folded Normal) with $M=1$ &  \textbf{Case 3} (Folded Normal) with $M=3$\vspace{0.3cm} \\
\end{tabular}
\end{minipage}
\caption{The approximate value functions when $\gamma = 0.02$ {with  threshold levels  $\widetilde{a}_1^*, \ldots, \widetilde{a}_5^*$ (circles) marked on the approximate value function curves. The values are monotone in the number of stages (the top curve corresponds to the approximate value function $\widetilde{v}^{(5)}$).}} \label{figure_multiple_002}
\end{center}
\end{figure}

\begin{figure}[htbp]
\begin{center}
\begin{minipage}{1.0\textwidth}
\centering
\begin{tabular}{cc}
 \includegraphics[scale=0.55]{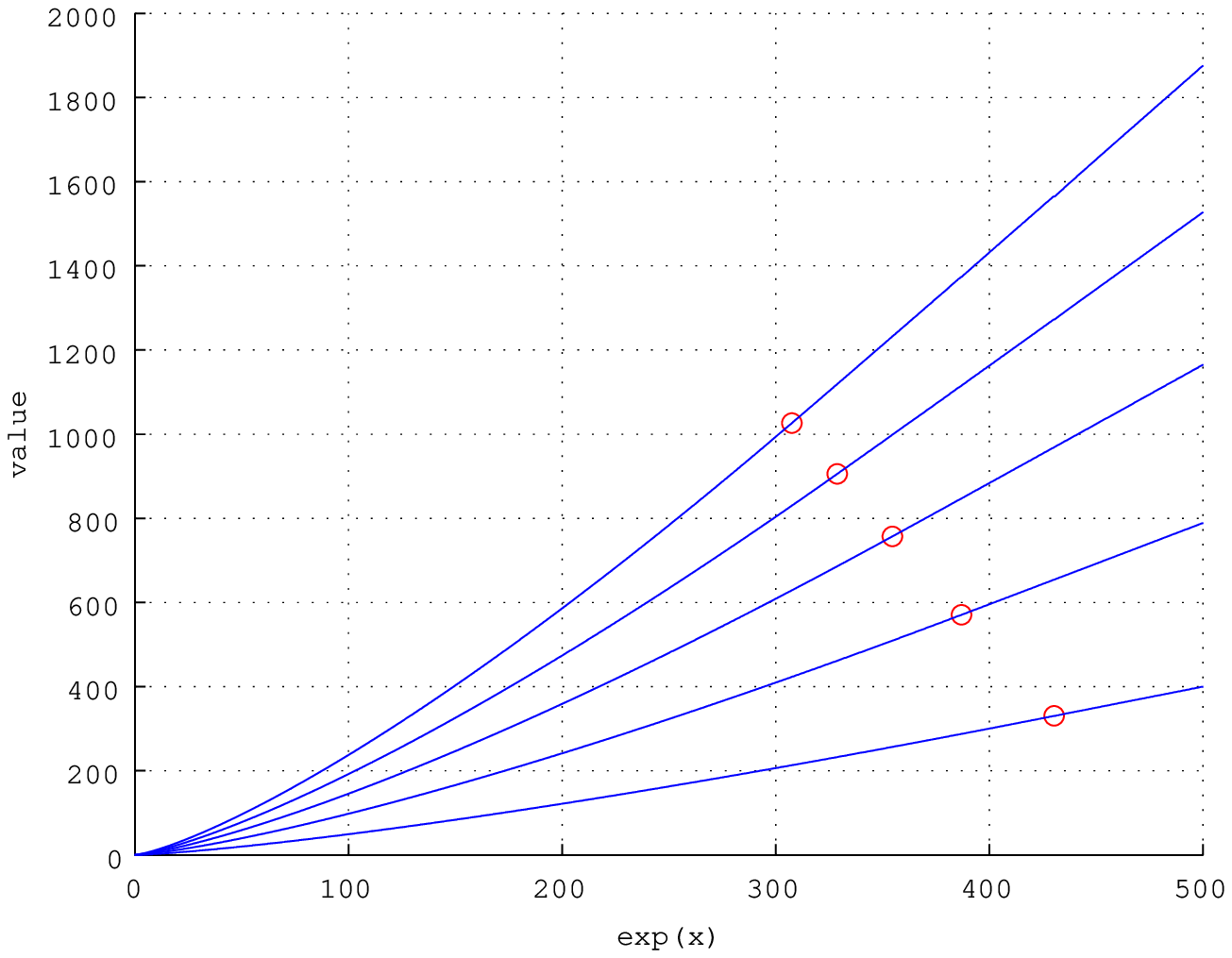} & \includegraphics[scale=0.55]{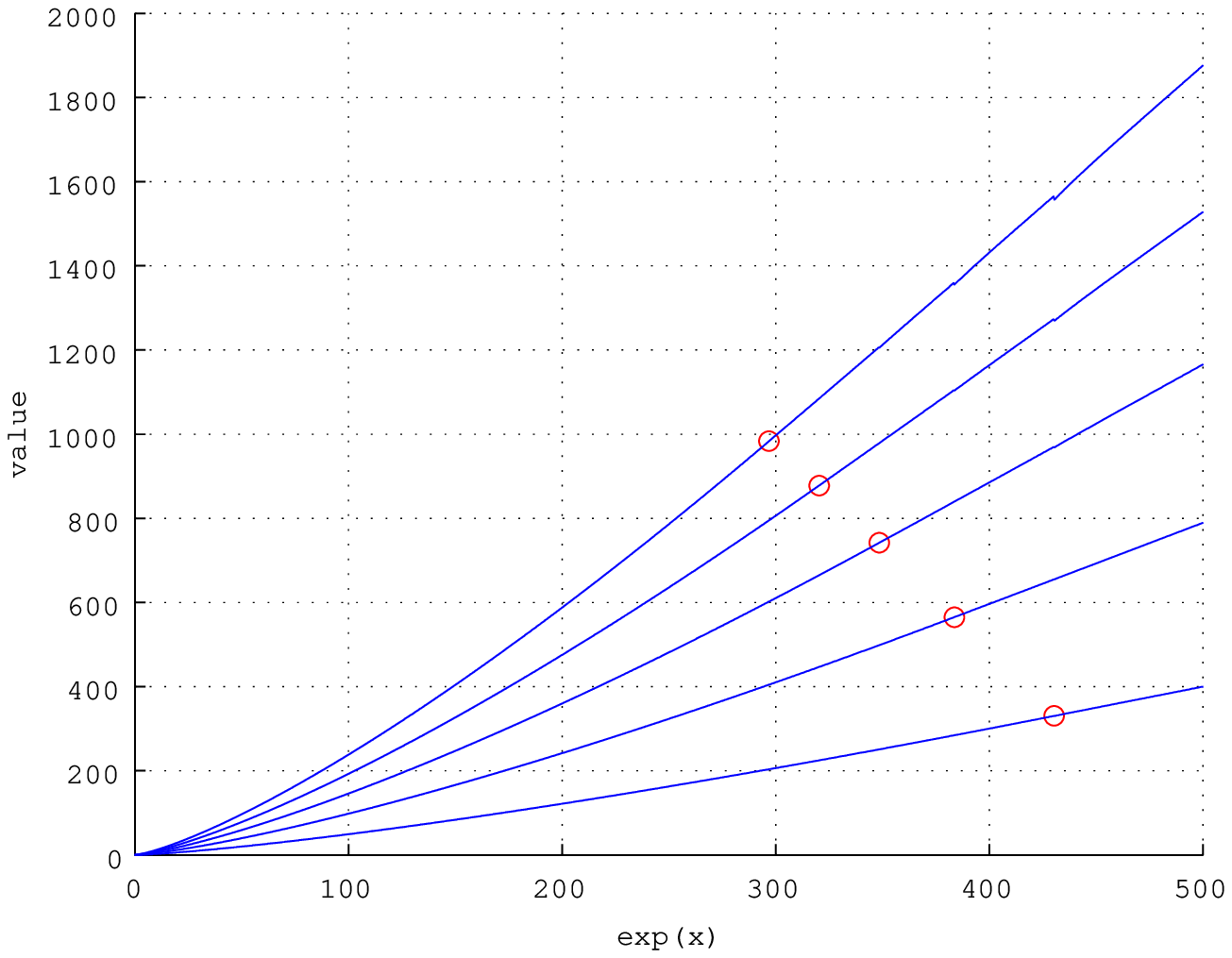}  \\
\textbf{Case 1}  (Exponential) with $M=1$ & \textbf{Case 1}  (Exponential) with $M=3$ \vspace{0.3cm} \\
  \includegraphics[scale=0.55]{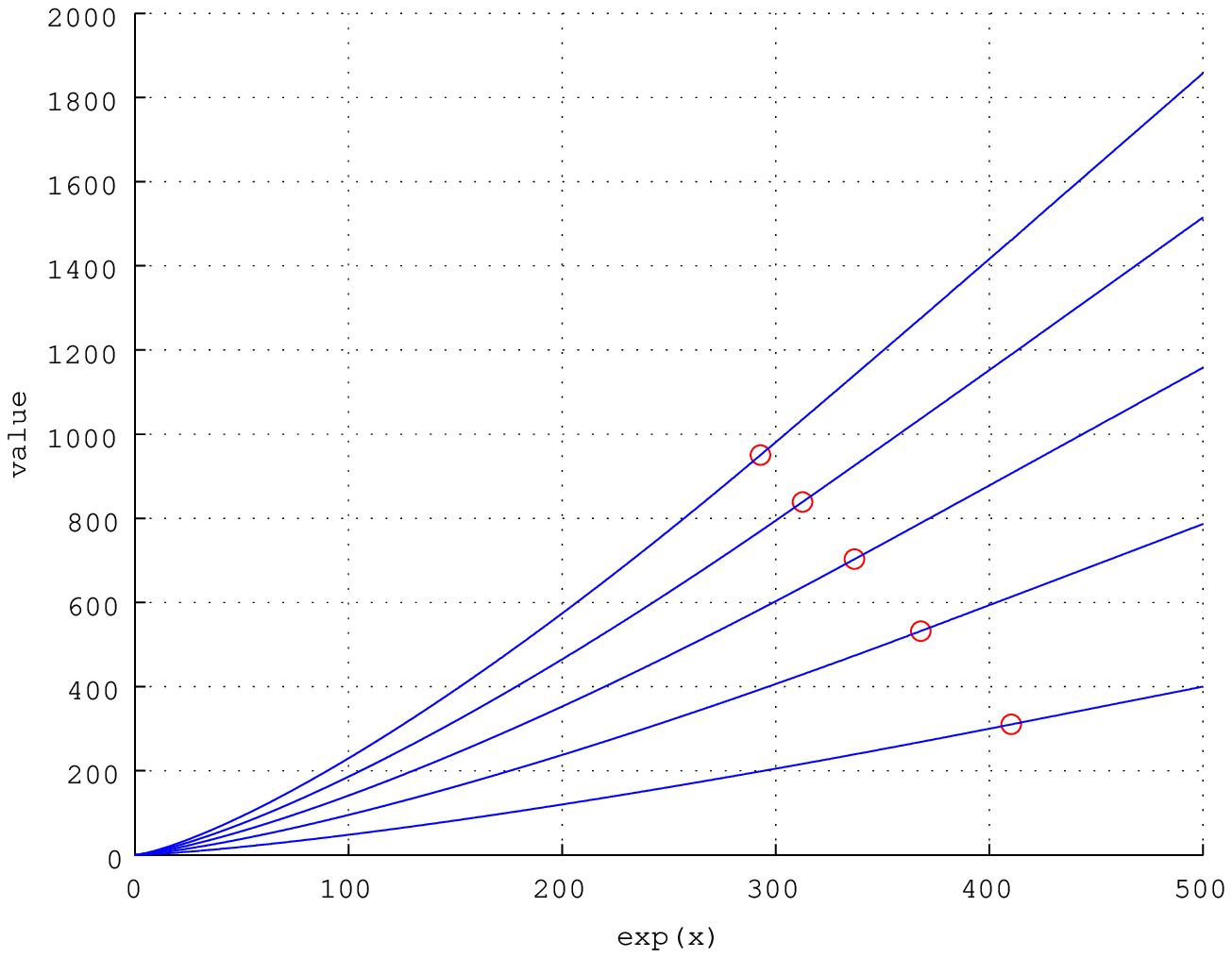} & \includegraphics[scale=0.55]{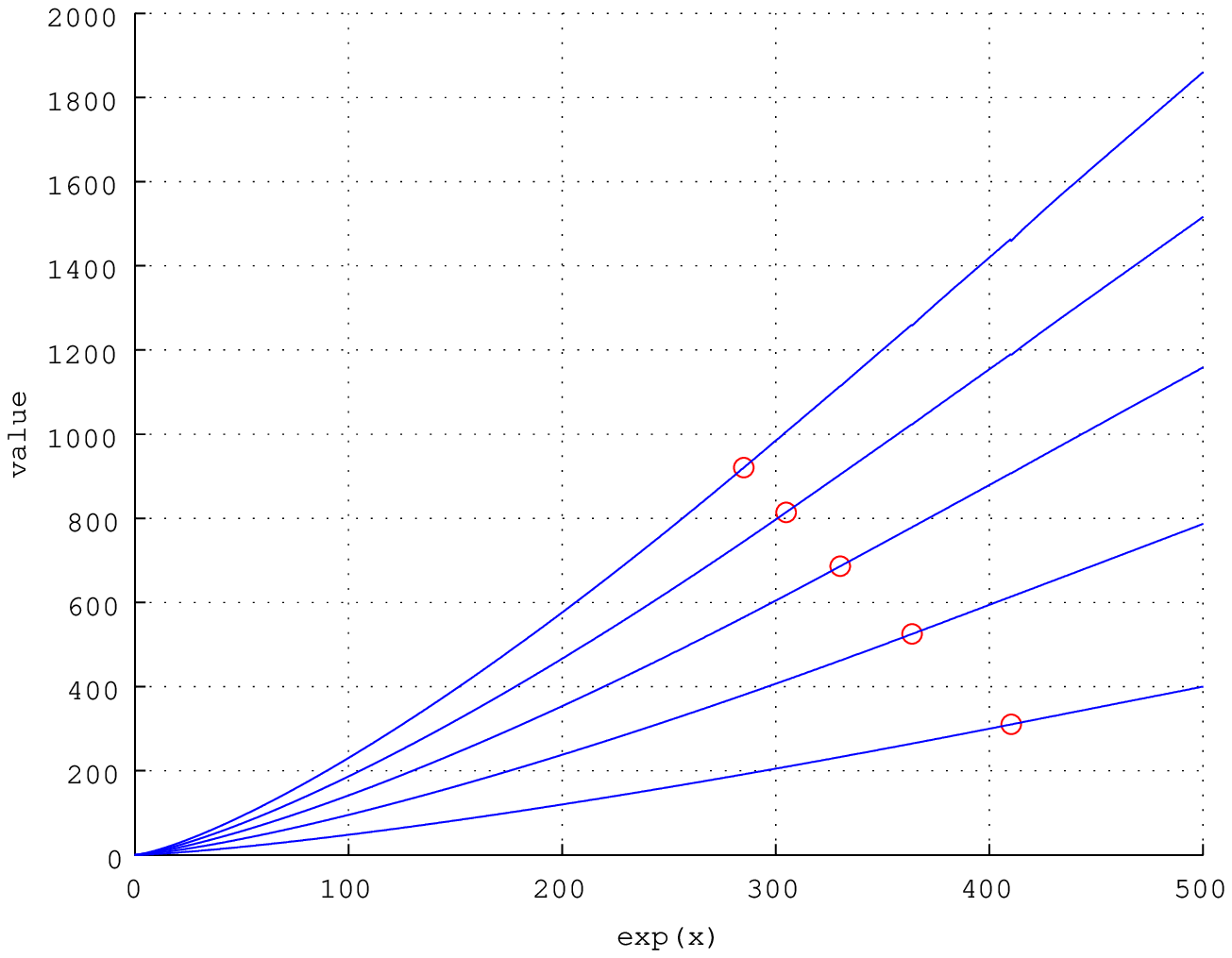}  \\
\textbf{Case 2}   (Weibull) with $M=1$ &  \textbf{Case 2}  (Weibull) with $M=3$ \vspace{0.3cm} \\
   \includegraphics[scale=0.55]{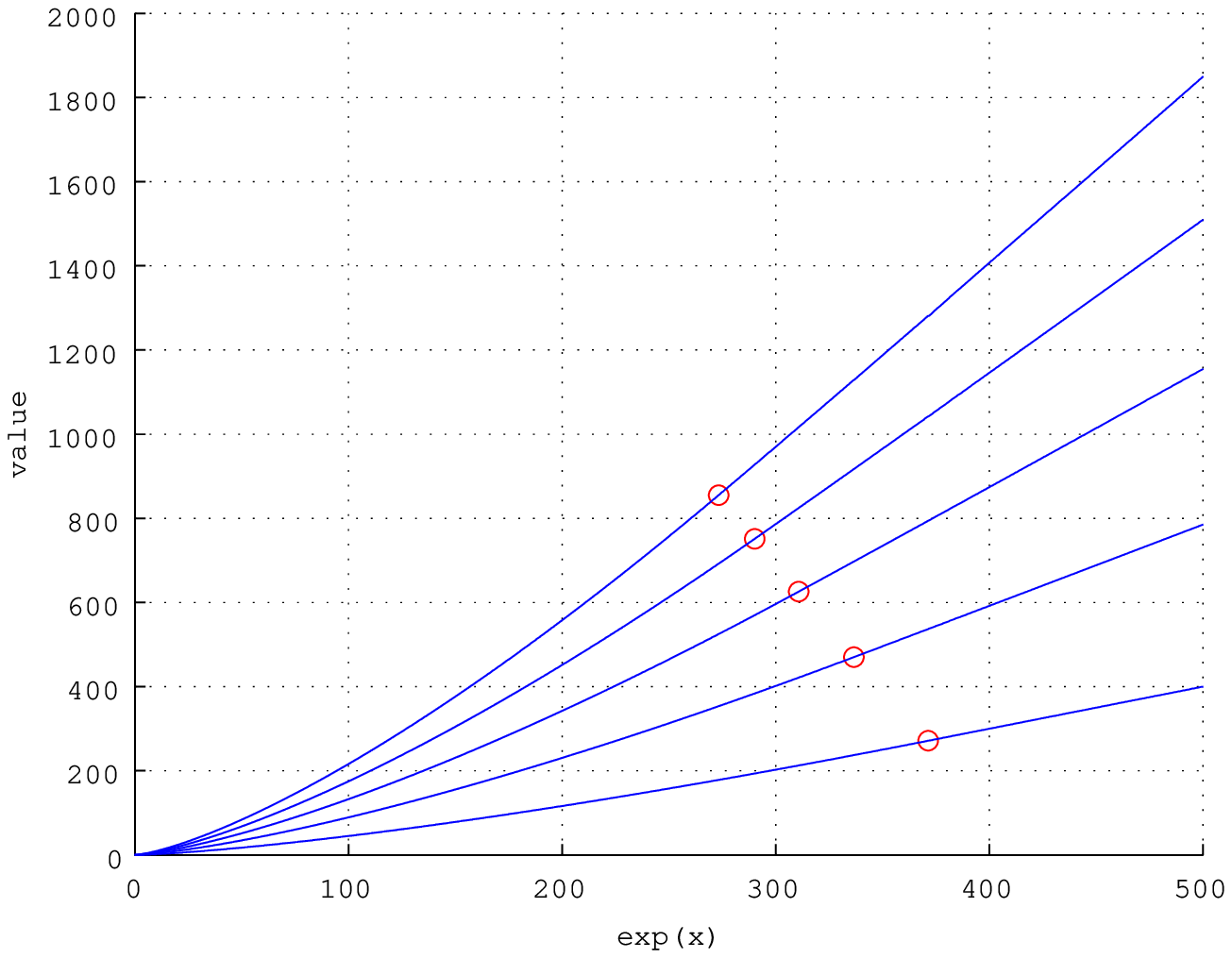} & \includegraphics[scale=0.55]{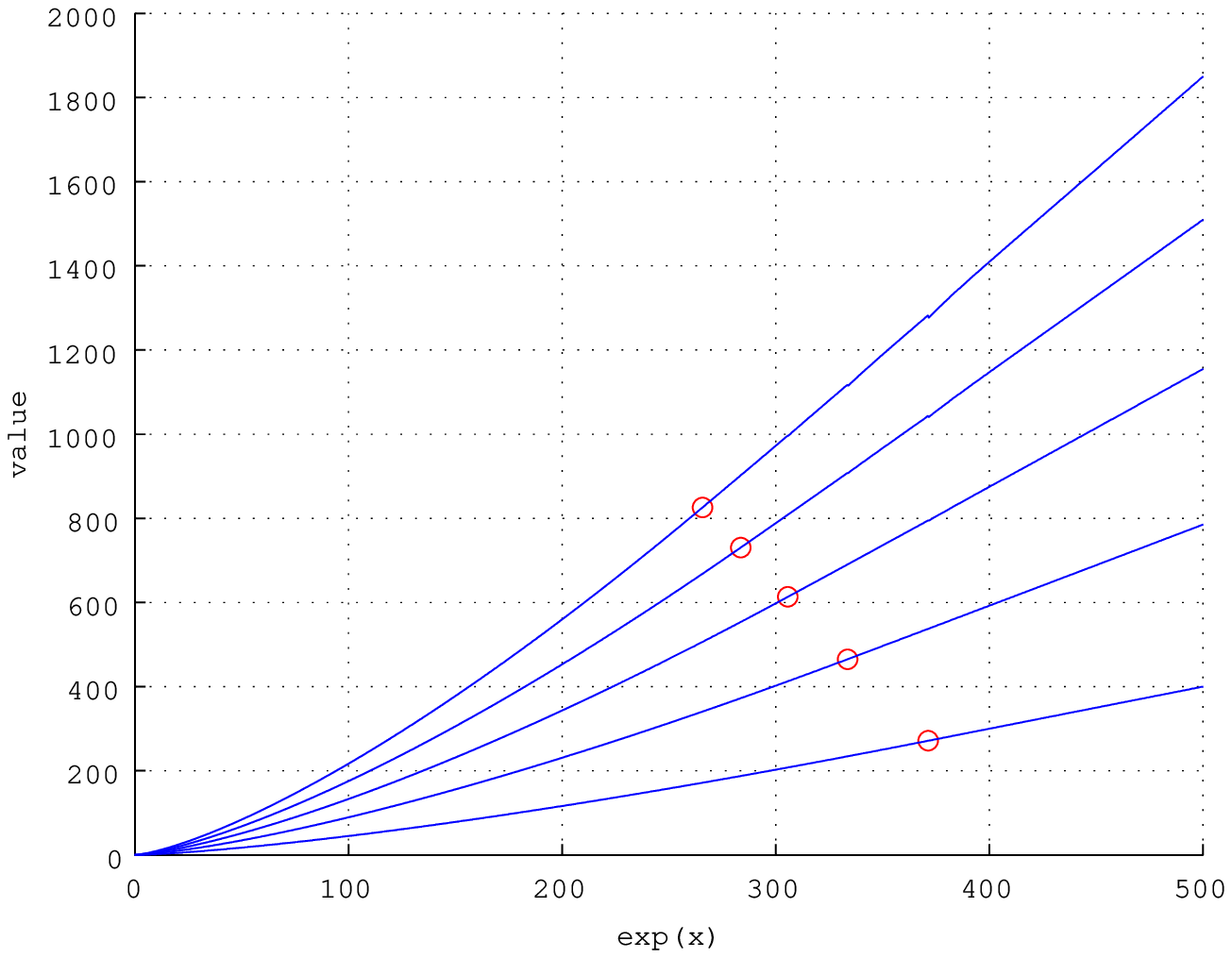}  \\
\textbf{Case 3}  (Folded Normal) with $M=1$ &  \textbf{Case 3}  (Folded Normal) with $M=3$\vspace{0.3cm} \\
\end{tabular}
\end{minipage}
\caption{The approximate value functions when $\gamma = 0.1$. } \label{figure_multiple_01}
\end{center}
\end{figure}

\subsection{Dependence on  $N$ and $M$}
In Figure \ref{figure_convergence_threshold}, we show the threshold levels with respect to the number of stages $N$ and to the Erlang shape parameter $M$ based on  \textbf{Case 3} with $\gamma = 0.1$.   On the left panel, we plot  $\widetilde{a}^*_1, \ldots, \widetilde{a}^*_4$ for fixed $M=1,\ldots, 4$.  Note that the first threshold  $\widetilde{a}^*_1$  is independent of $M$.    In the example on  the right panel, we plot  the threshold  $\widetilde{a}^*_2$ over  $M=1, \ldots, 10$.   As $M$ increases from $1$ to $10$,  the threshold first decreases relatively fast within the narrow range $(5.81, 5.82)$ for $M = 1, 2,3$, and then flattens toward the value 5.805 for larger $M$. Between $M=$ 9 and 10, the difference is well less than 0.001.

  Figure \ref{figure_convergence} illustrates  $\widetilde{v}^{(N)}$ for $N = 1, \ldots, 15$ with  $\gamma=0.05$ and  $M=1$. With more remaining exercise opportunities (large $N$),  the function $\widetilde{v}^{(N)}$ is higher and the optimal threshold for the previous exercise is lower.  We observe that the distance between  successive   optimal thresholds (marked by circles)   reduces  as the number of remaining exercises increases (see e.g. the top value function curve marked with 15 circles).

\begin{figure}[htbp]
\begin{center}
\begin{minipage}{1.0\textwidth}
\centering
\begin{tabular}{cc}
 \includegraphics[scale=0.58]{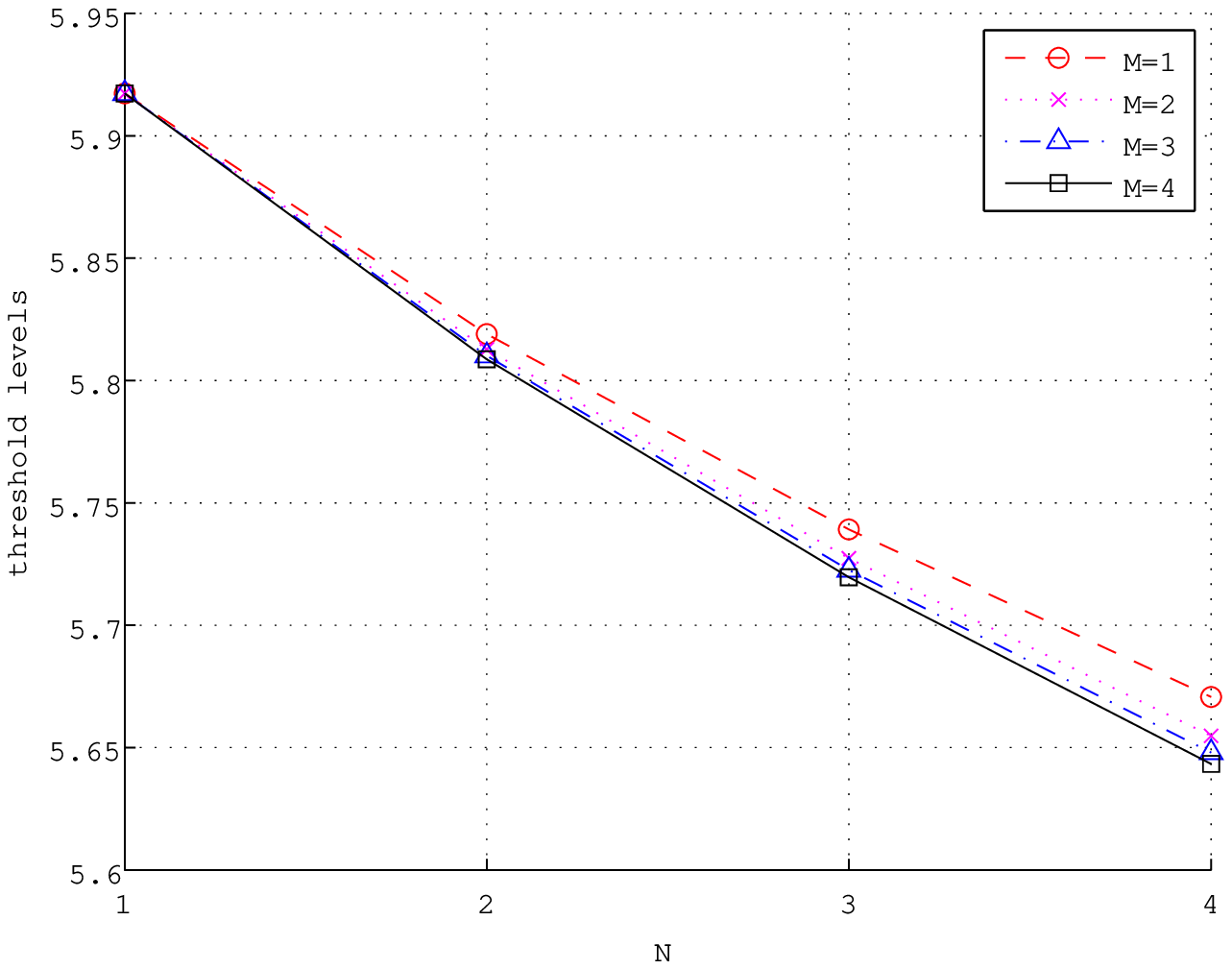} & \includegraphics[scale=0.58]{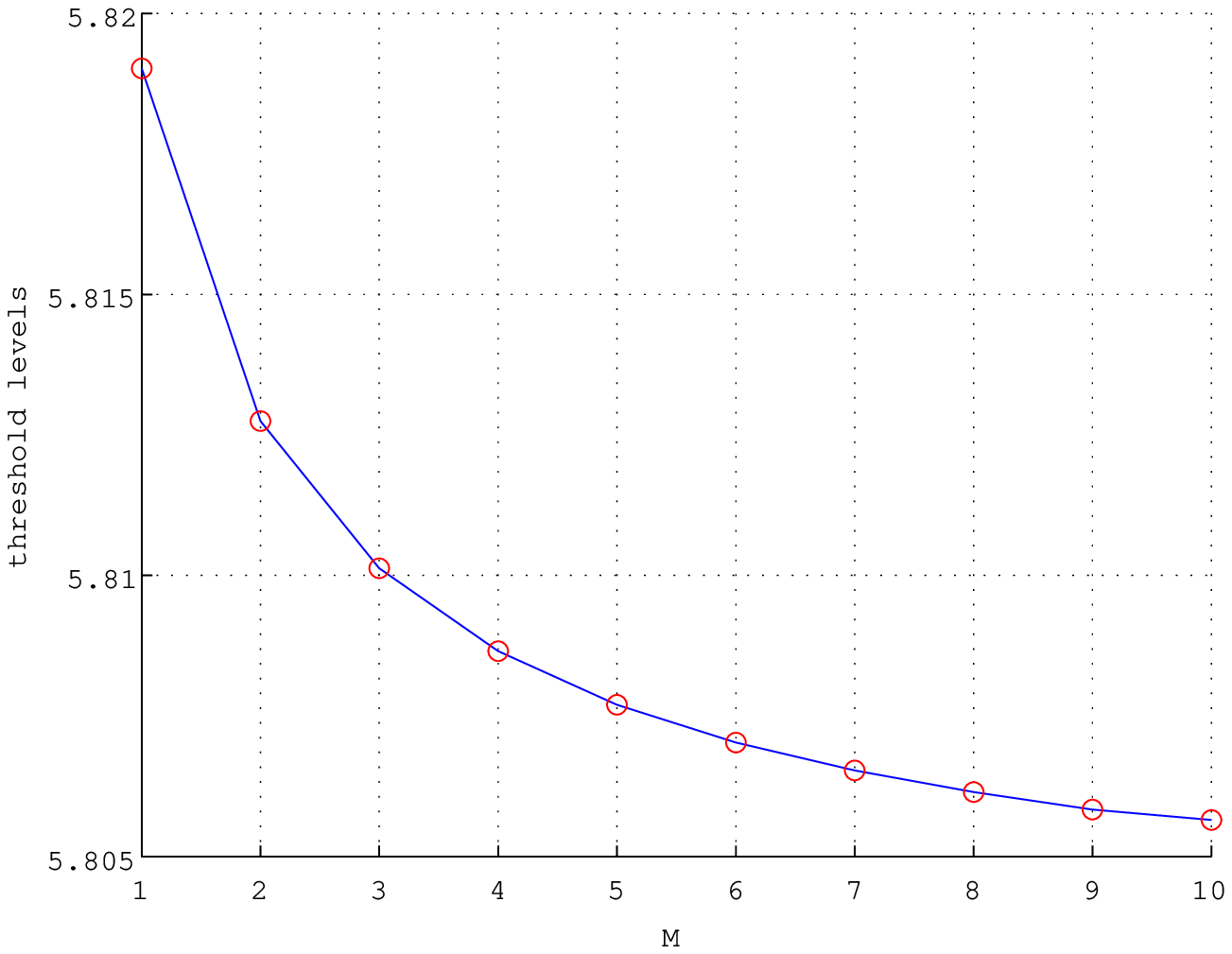}
\end{tabular}
\end{minipage}
\caption{Dependence of the thresholds  on $N$ and $M$  for \textbf{Case 3} with $\gamma = 0.1$. {The left panel plots $\widetilde{a}^*_1, \ldots, \widetilde{a}^*_4$ for fixed $M=1,\ldots, 4$.  The right panel plots  the threshold  $\widetilde{a}^*_2$ over  $M=1, \ldots, 10$.  }} \label{figure_convergence_threshold}
\end{center}
\end{figure}

\begin{figure}[h!]
\begin{center}
\begin{minipage}{1.0\textwidth}
\centering
\begin{tabular}{c}
 \includegraphics[scale=0.58]{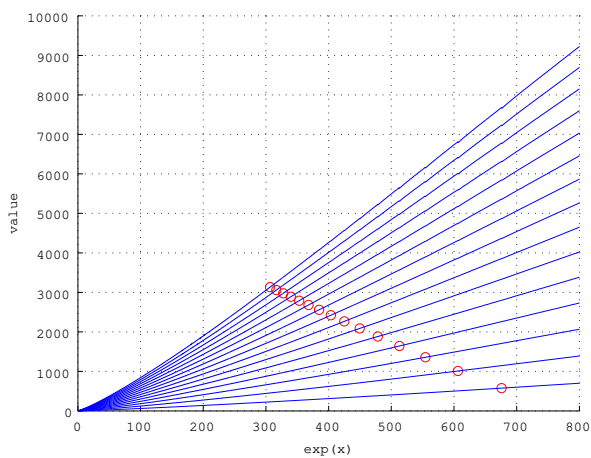}
\end{tabular}
\end{minipage}
\caption{Value functions and optimal thresholds {(marked by circles)} for \textbf{Case 3} with $N=1, \ldots, 15$ {when $M=1$ and} $\gamma = 0.05$.} \label{figure_convergence}
\end{center}
\end{figure}

\subsection{Limitations} Recall from formula \eqref{u_form} for  recovering the  function $u^{(n,m)}$ from the parameter set $\Gamma = (A,B,C,D,E)$.  In particular  the coefficients $D$'s are multiplied by $\exp(\Phi(\alpha + M/\delta) x) x^h$, so this term  tends to become very  large near $\widetilde{a}_1^*$, even though  $D$'s are zero above $\widetilde{a}_1^*$.  From our numerical tests,   it can take value up to the order of $10^{50}$ while the values of $D$'s tend to remain small.   Recall that $p := \alpha + M/\delta$ increases in $M$ and so does $\Phi(p) = \Phi(\alpha + M/\delta)$.   In addition the maximum value of $h$ (the counting index in \eqref{u_form}) increases as $M$ and $N$ increase.   As a result,  the computation can break down when the Erlang shape parameter $M$ and/or  the number of exercises $N$ are  large. MATLAB or other softwares with double precision cannot handle the computation involving these large numbers.

\begin{figure}[htbp]
\begin{center}
 \includegraphics[scale=0.57]{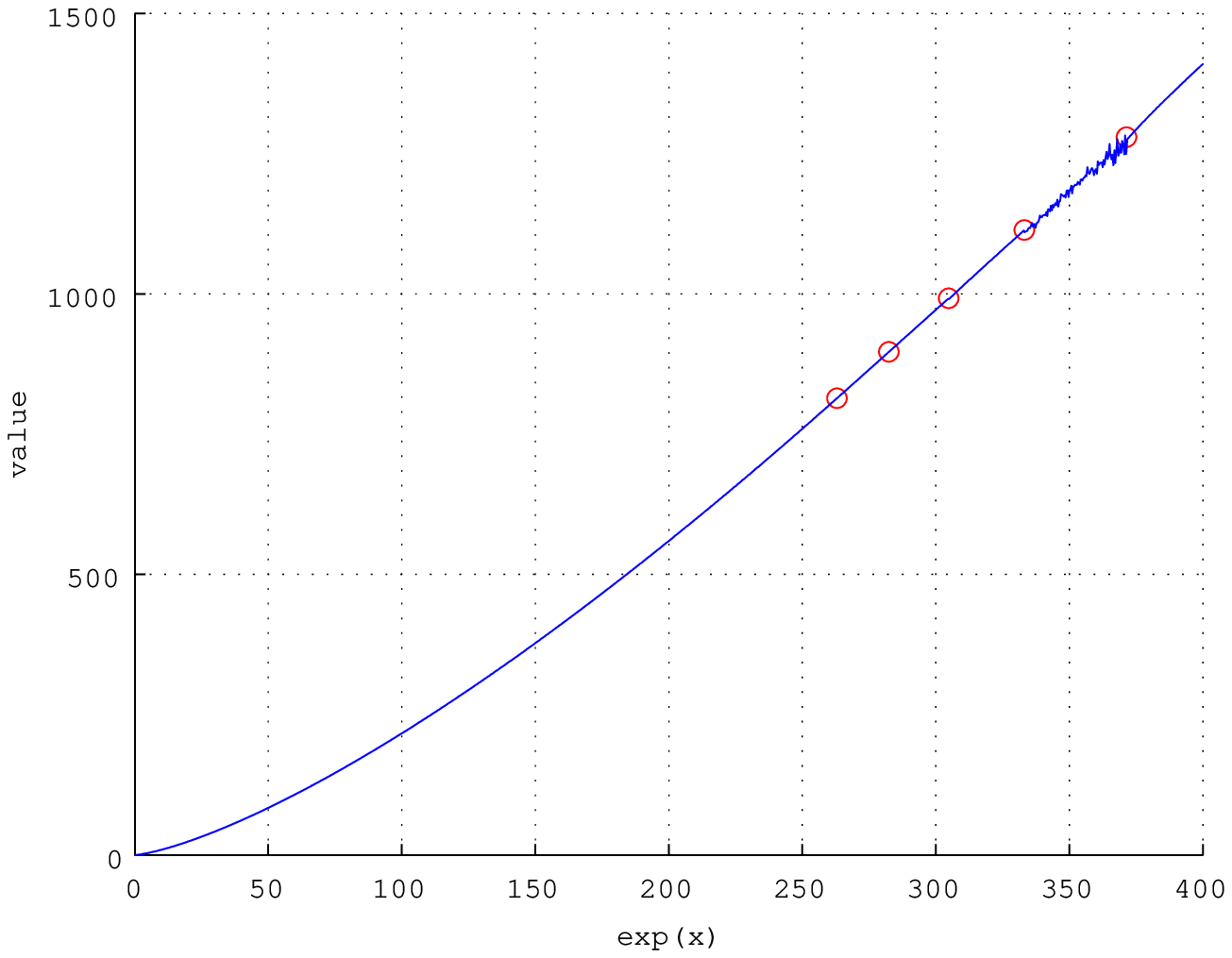}  \includegraphics[scale=0.57]{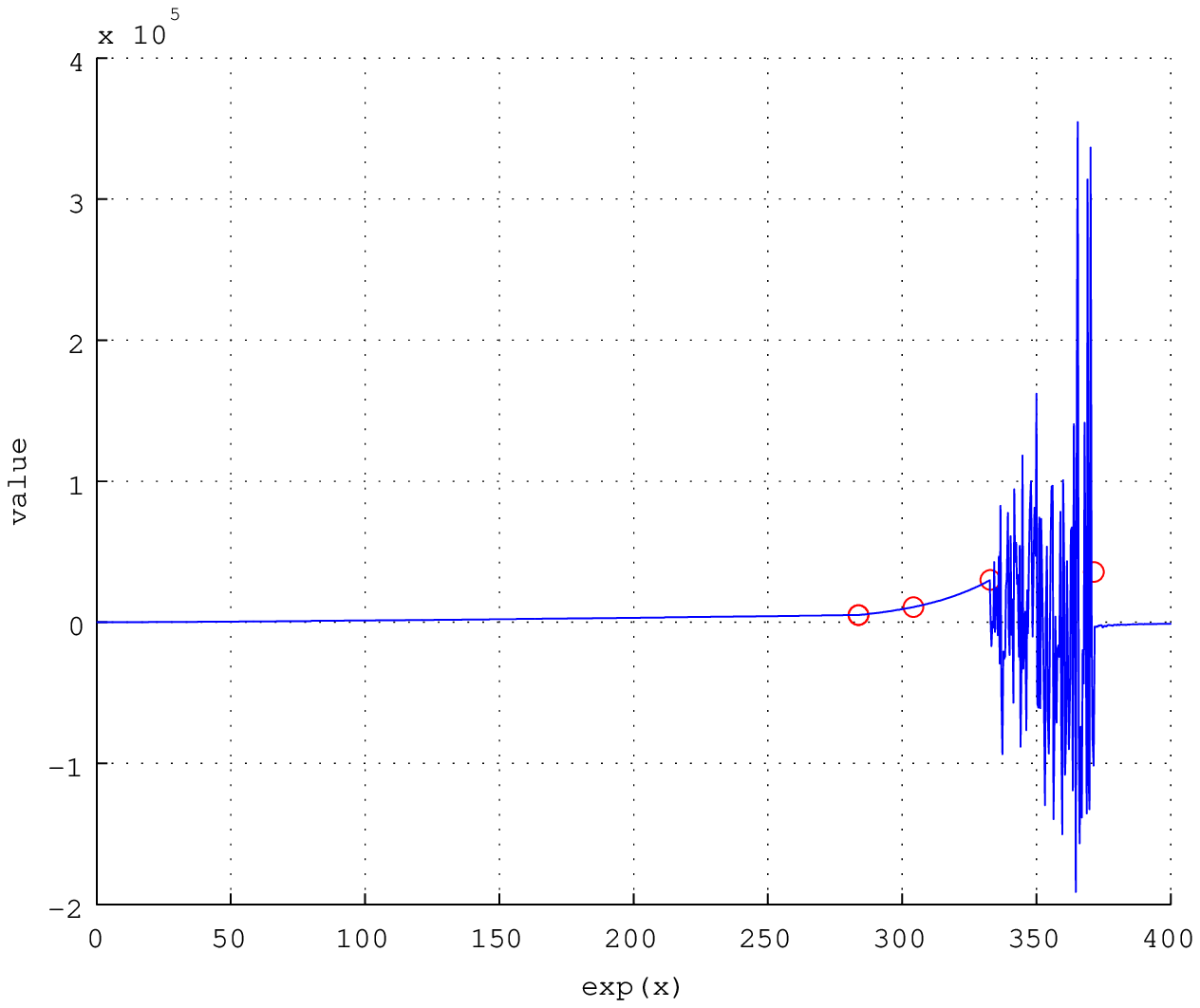}
\caption{Limitations: the value functions computed by MATLAB for \textbf{Case 3}  with $\gamma=0.1$ for $M=$4 (left), 5 (right). } \label{figure_limitation}
\end{center}
\end{figure}

In Figure \ref{figure_limitation}, we plot the function $\widetilde{v}^{(N)}$ computed by MATLAB for \textbf{Case 3} with $\gamma = 0.1$ and $N = 5$ for {$M=4$ (left) and $M=5$ (right)}.  While the parameters in  $\Gamma$ can be computed instantly and  do not explode,  numerical  imprecision   in computing the value function may  arise when  small parameters  are   multiplied   by very large numbers and summed up.
Indeed, with  {$M=4$}, discontinuities appear between $\widetilde{a}_2^*$ and $\widetilde{a}_1^*$, and with  {$M=5$} the error becomes visibly clear, yielding an inaccurate value function and  threshold levels $\widetilde{a}^*$.  This is consistent with the observation given in \cite{Kleinert_Schaik} (that deal with an American put option), where their randomization algorithm requires more than double precision.   This issue can potentially be resolved by setting the machine epsilon as in \cite{Kleinert_Schaik} so as to increase precision.  However, this is beyond the scope of our paper, because it requires special skills in computer science and it is our aim to evaluate the performance that can be achieved in a usual computing environment. Even given this potential limitation,   the analytic formula is useful in its own right as  it reveals the mathematical structure of the solution to the optimal multiple stopping problem.

 This observation also highlights the potential  trade-off between  selecting large  values of $M$ and $N$ given  limits on machine precision.
Nevertheless, we have seen from Tables \ref{table_randomization_simulation_002} and \ref{table_randomization_simulation_01}  that the approximation remains stable  for different small values of $M$. In summary, the analytic formula \eqref{u_form} is   very useful and tractable  for solving the optimal multiple stopping problem, as compared to the simulation approach.

\appendix

\section{Proof of Proposition \ref{proposition_recursion} and the updating formula}

Fix $n \geq 1$ and $0 \leq m \leq M-1$ and suppose \eqref{u_form}, \eqref{parameter_set} and \eqref{hypothesis_corner} hold.   We shall show that the identity \eqref{M_x_temporary} can be written as \eqref{updating_formula_equation} where the parameter set $\Gamma_{n,m+1}$
 is given by \eqref{update_formula} below.

First,  for $\alpha < \beta$, straightforward integration gives the following expression for $\varpi_{l}^{(n,m)}$ as in  \eqref{def_varpi_l_n_m}.

(1) When $q = \Phi(p)$,
\begin{align}
\begin{split}
\varpi_{l}^{(n,m)}(s,t, \Phi(p)) &= A^{(n,m,l)} \frac {e^{-\Phi(p) s}-e^{-\Phi(p) t}}{\Phi(p)}
+  B^{(n,m,l)}   \frac {e^{-(\Phi(p)-1) s}-e^{-(\Phi(p)-1) t}}{\Phi(p)-1} \\
&+ \sum_{j \in \mathcal{I}_p}  \sum_{h = 0}^{I_{n,m}}  e^{-(\Phi(p)+\xi_{j,p}) s} s^h \sum_{g=h}^{I_{n,m}} C_{j,g}^{(n,m,l)} \frac {g!} {h! (\Phi(p)+\xi_{j,p})^{g+1-h}}   \\ &- \sum_{j \in \mathcal{I}_p} \sum_{h = 0}^{I_{n,m}} e^{-(\Phi(p)+\xi_{j,p}) t} t^h \sum_{g=h}^{I_{n,m}} C_{j,g}^{(n,m,l)} \frac {g!} {h! (\Phi(p)+\xi_{j,p})^{g+1-h}}  \\
&+ \sum_{h=0}^{I_{n,m}} \frac{D_h^{(n,m,l)}}{h+1} (t^{h+1}-s^{h+1}) + E^{(n,m,l)}\frac {e^{-( \Phi(p)-\Phi(\alpha)) s}-e^{-( \Phi(p)-\Phi(\alpha)) t}}{ \Phi(p)-\Phi(\alpha)}.
\end{split}
\end{align}
In particular, for $t = \infty$, if $\Phi(p) > 1$ and $D^{(n,m,l)}=E^{(n,m,l)} = 0$
\begin{align}
\begin{split}
\varpi_{l}^{(n,m)}(s, \infty, \Phi(p)) &= A^{(n,m,l)} \frac {e^{-\Phi(p) s}}{\Phi(p)}
+  B^{(n,m,l)}   \frac {e^{-(\Phi(p)-1) s}}{\Phi(p)-1} \\
&+ \sum_{j \in \mathcal{I}_p}  \sum_{h = 0}^{I_{n,m}}  e^{-(\Phi(p)+\xi_{j,p}) s} s^h \sum_{g=h}^{I_{n,m}} C_{j,g}^{(n,m,l)} \frac {g!} {h! (\Phi(p)+\xi_{j,p})^{g+1-h}}.
\end{split}
\end{align}
(2) When $q = - \xi_{i,p}$, $\varpi_l^{(n,m)}(s,t, -\xi_{i,p})$ equals
\begin{align}
\begin{split}
 &A^{(n,m,l)} \frac {e^{\xi_{i,p} t}-e^{\xi_{i,p} s}}{\xi_{i,p}}
+  B^{(n,m,l)}   \frac {e^{(\xi_{i,p}+1) t}-e^{(\xi_{i,p}+1) s}}{\xi_{i,p}+1} +  \sum_{g=0}^{I_{n,m}} \frac {C_{i,g}^{(n,m,l)}} {g+1} (t^{g+1} - s^{g+1})  \\ &+   \sum_{j \in \mathcal{I}_p \backslash \{i\}}   \sum_{h= 0}^{I_{n,m}}  e^{(\xi_{i,p}-\xi_{j,p}) s} s^h  \sum_{g=h}^{I_{n,m}}    C_{j,g}^{(n,m,l)} \frac {g!} {h! (\xi_{j,p}-\xi_{i,p})^{g+1-h}} \\
&-  \sum_{j \in \mathcal{I}_p \backslash \{i\}}   \sum_{h = 0}^{I_{n,m}} e^{(\xi_{i,p}-\xi_{j,p}) t} t^h   \sum_{g=h}^{I_{n,m}}  C_{j,g}^{(n,m,l)} \frac {g!} {h! (\xi_{j,p}-\xi_{i,p})^{g+1-h}} \\
&+ \sum_{j = 0}^{I_{n,m}}  \Big[ e^{(\xi_{i,p}+\Phi(p)) s} s^j -   e^{(\xi_{i,p}+\Phi(p)) t} t^j \Big] \sum_{h=j}^{I_{n,m}} D_h^{(n,m,l)}  \frac {h!} {j! (-(\xi_{i,p}+\Phi(p)))^{h+1-j}} \\
&+ E^{(n,m,l)}\frac {e^{( \xi_{i,p}+\Phi(\alpha)) t}-e^{( \xi_{i,p}+\Phi(\alpha)) s}}{ \xi_{i,p}+\Phi(\alpha)}.
\end{split}
\end{align}
In particular, if $s = -\infty$ and $A^{(n,m,l)}=B^{(n,m,l)}=C^{(n,m,l)}=0$,
\begin{align}
\varpi_l^{(n,m)}(-\infty, t, -\xi_{i,p})  = -\sum_{j = 0}^{I_{n,m}}    e^{(\xi_{i,p}+\Phi(p)) t} t^j  \sum_{h=j}^{I_{n,m}} D_h^{(n,m,l)}  \frac {h!} {j! (-(\xi_{i,p}+\Phi(p)))^{h+1-j}} + E^{(n,m,l)}\frac {e^{( \xi_{i,p}+\Phi(\alpha)) t}}{ \xi_{i,p}+\Phi(\alpha)}.
\end{align}

By letting $s = x$ and $t = \widetilde{a}^*_{L-1}$ and multiplying by $\exp(\Phi(p)x)$, we obtain
\begin{multline} \label{varpi_expression_1}
e^{\Phi(p) x} \varpi_{L}(x,\widetilde{a}_{L-1}^*, \Phi(p)) = \hat{A}^{(n,m,L)} + \hat{B}^{(n,m,L)} e^x  \\ +  \sum_{i \in \mathcal{I}_p} \sum_{h=0}^{I_{n,m}+1} ( \hat{C}^{(n,m,L)}_{i,h} e^{- \xi_{i,p} x} x^h) +  \sum_{h=0}^{I_{n,m}+1} (\hat{D}^{(n,m,L)}_h e^{\Phi(p) x} x^h) + \hat{E}^{(n,m,L)} e^{\Phi(\alpha) x},
\end{multline}
where
\begin{align}
\begin{split}
\hat{A}^{(n,m,L)} &:=  {A^{(n,m,L)}} / {\Phi(p)}, \; \hat{B}^{(n,m,L)}:=  {B^{(n,m,L)}} /{(\Phi(p)-1)}, \\ \hat{C}^{(n,m,L)}_{i,h} &:=  \sum_{g=h}^{I_{n,m}} C_{i,g}^{(n,m,L)} \frac {g!} {h! (\Phi(p)+\xi_{i,p})^{g+1-h}},    \quad  0 \leq h \leq I_{n,m}, \quad  \hat{C}^{(n,m,L)}_{i,I_{n,m}+1} := 0,  \quad i \in \mathcal{I}_p,
\end{split}
\end{align}
and
\begin{align}
\begin{split}
\hat{D}^{(n,m,L)}_0 &:= 1_{\{L \geq 2\}}\Big[  \frac {-  A^{(n,m,L)}  e^{-\Phi(p) \widetilde{a}_{L-1}^*}}{\Phi(p)}  +  B^{(n,m,L)}   \frac {- e^{-(\Phi(p)-1) \widetilde{a}_{L-1}^*}}{\Phi(p)-1} \\ &\qquad - \sum_{j \in \mathcal{I}_p} \sum_{h = 0}^{I_{n,m}} e^{-(\Phi(p)+\xi_{j,p}) \widetilde{a}_{L-1}^*} (\widetilde{a}_{L-1}^*)^h \sum_{g=h}^{I_{n,m}} C_{j,g}^{(n,m,L)} \frac {g!} {h! (\Phi(p)+\xi_{j,p})^{g+1-h}} \\
&\qquad + \sum_{h=0}^{I_{n,m}} \frac{D_h^{(n,m,L)}}{h+1} (\widetilde{a}_{L-1}^*)^{h+1}  - \frac {E^{(n,m,L)}} { \Phi(p)-\Phi(\alpha)}  e^{-( \Phi(p)-\Phi(\alpha)) \widetilde{a}_{L-1}^*}
\Big], \\
\hat{D}^{(n,m,L)}_h &:= - 1_{\{L \geq 2\}} {D_{h-1}^{(n,m,L)}}/{h}, \quad  1 \leq h \leq I_{n,m}+1, \\
\hat{E}^{(n,m,L)} &:= 1_{\{L \geq 2\}} {E^{(n,m,L)}}/{ (\Phi(p)-\Phi(\alpha))}.
\end{split}
\end{align}

Similarly, we have
\begin{multline} \label{varpi_expression_2}
\sum_{i \in \mathcal{I}_p} \kappa_{i,p}   e^{-\xi_{i,p} x} \varpi_{L}^{(n,m)}(\widetilde{a}_{L}^*,x, -\xi_{i,p}) = \check{A}^{(n,m,L)}  +e^x \check{B}^{(n,m,L)} + \sum_{i \in \mathcal{I}_p}  \sum_{h=0}^{I_{n,m}+1} e^{- \xi_{i,p} x} x^h  \check{C}^{(n,m, L)}_{i,h}\\
+ \sum_{h=0}^{I_{n,m}+1}  e^{\Phi(p) x} x^h \check{D}^{(n,m,L)}_h  + e^{\Phi(\alpha) x} \check{E}^{(n,m, L, i)},
\end{multline}
where $\check{A}^{(n,m,L)} := \sum_{i \in \mathcal{I}_p}   \kappa_{i,p}{A^{(n,m,L)}} /{\xi_{i,p}}$, $\check{B}^{(n,m,L)} := \sum_{i \in \mathcal{I}_p}   \kappa_{i,p} {B^{(n,m,L)}}/{(\xi_{i,p}+1)}$, \\ $\check{C}^{(n,m,L)}_{j,h} := \sum_{i \in \mathcal{I}_p}   \kappa_{i,p} \check{C}^{(n,m, L,i)}_{j,h}$,
$\check{D}^{(n,m,L)}_h := \sum_{i \in \mathcal{I}_p}   \kappa_{i,p} \check{D}^{(n,m, L,i)}_h$,
$\check{E}^{(n,m, L)} := \sum_{i \in \mathcal{I}_p}   \kappa_{i,p} { E^{(n,m,L)}} /{ (\xi_{i,p}+\Phi(\alpha))}$,
with, for all $i \in \mathcal{I}_p$,
\begin{align}
\begin{split}
\check{C}^{(n,m,L,i)}_{i,0} &:= - A^{(n,m,L)} \frac { e^{\xi_{i,p} \widetilde{a}_{L}^*}}{\xi_{i,p}}
-  B^{(n,m,L)}   \frac {e^{(\xi_{i,p}+1) \widetilde{a}_{L}^*}}{\xi_{i,p}+1} - \sum_{g=0}^{I_{n,m}} \frac {C_{i,g}^{(n,m,L)}} {g+1} (\widetilde{a}_{L}^*)^{g+1}  \\
 &\qquad + \sum_{j \in \mathcal{I}_p \backslash \{i\}}   \sum_{h= 0}^{I_{n,m}} (\widetilde{a}_{L}^*)^h   e^{(\xi_{i,p}-\xi_{j,p}) \widetilde{a}_{L}^*}  \sum_{g=h}^{I_{n,m}}    C_{j,g}^{(n,m,L)} \frac {g!} {h! (\xi_{j,p}-\xi_{i,p})^{g+1-h}} \\
&\qquad + \sum_{j = 0}^{I_{n,m}}  \Big[e^{(\xi_{i,p}+\Phi(p)) \widetilde{a}_{L}^*} (\widetilde{a}_{L}^*)^j \Big] \sum_{h=j}^{I_{n,m}} D_h^{(n,m,L)}  \frac {h!} {j! (-(\xi_{i,p}+\Phi(p)))^{h+1-j}} \\
&\qquad - E^{(n,m,L)}\frac { e^{( \xi_{i,p}+\Phi(\alpha)) \widetilde{a}_{L}^*}}{ \xi_{i,p}+\Phi(\alpha)}, \\
\check{C}^{(n,m,L,i)}_{i,h} &:=   {C_{i,h-1}^{(n,m,L)}} / {h}, \quad 1 \leq h \leq I_{n,m}+1,   \\
\check{C}^{(n,m,L,i)}_{j,h} &:= - \sum_{g=h}^{I_{n,m}}  C_{j,g}^{(n,m,L)} \frac {g!} {h! (\xi_{j,p}-\xi_{i,p})^{g+1-h}}, \quad 0 \leq h \leq I_{n,m}, \quad \check{C}^{(n,m,L,i)}_{j, I_{n,m}+1} := 0, \quad   j \in \mathcal{I}_p \backslash \{ i \},  \\
\check{D}^{(n,m,L,i)}_h &:= -  \sum_{g=h}^{I_{n,m}} D_g^{(n,m,L)}  \frac {g!} {h! (-(\xi_{i,p}+\Phi(p)))^{g+1-h}}, \quad 0 \leq h \leq I_{n,m}, \quad \check{D}^{(n,m,L, i)}_{I_{n,m}+1} := 0.
\end{split}
\end{align}

Substituting \eqref{varpi_expression_1} and \eqref{varpi_expression_2} in \eqref{M_x_temporary}, Proposition \ref{proposition_recursion}   is satisfied by setting $I_{n,m+1}=I_{n,m}+1$ and
\begin{align} \label{update_formula}
\begin{split}
A^{(n,m+1,L)} &:=   1_{\{L \leq n\}} (\Phi'(p)\hat{A}^{(n,m,L)} +  \check{A}^{(n,m,L)}), \\
B^{(n,m+1,L)} &:= 1_{\{L \leq n\}} (\Phi'(p) \hat{B}^{(n,m,L)} +  \check{B}^{(n,m,L)}), \\
C^{(n,m+1,L)}_{i,0} &:= 1_{\{L \leq n\}} \big( \Phi'(p) \hat{C}_{i,0}^{(n,m,L)} +  \check{C}_{i,0}^{(n,m,L)} +   \kappa_{i,p}   \sum_{L+1 \leq l \leq n+1} \varpi_{l}^{(n,m)}(\widetilde{a}_{l}^*,\widetilde{a}_{l-1}^*, -\xi_{i,p}) \big), \quad i \in \mathcal{I}_p, \\
C^{(n,m+1,L)}_{i,h} &:= 1_{\{L \leq n\}} (\Phi'(p) \hat{C}_{i,h}^{(n,m,L)} + \check{C}_{i,h}^{(n,m,L)}), \quad 1 \leq h \leq I_{n,m+1}, \; i \in \mathcal{I}_p, \\
D^{(n,m+1,L)}_{0} &:= 1_{\{L \geq 2\}}  \big(\Phi'(p)\hat{D}_{0}^{(n,m,L)} + \check{D}_{0}^{(n,m,L)}+ \Phi'(p) \sum_{1 \leq l \leq L-1} \varpi_{l}^{(n,m)}(\widetilde{a}_l^*,\widetilde{a}_{l-1}^*, \Phi(p)) \big), \\
D^{(n,m+1,L)}_{h} &:= 1_{\{L \geq 2\}} \big(\Phi'(p) \hat{D}_{h}^{(n,m,L)} + \check{D}_{h}^{(n,m,L)}\big),  \quad 1 \leq h \leq I_{n,m+1}, \\
E^{(n,m+1,L)} &:= 1_{\{L \geq 2\}}  (\Phi'(p) \hat{E}^{(n,m,L)} + \check{E}^{(n,m,L)}).
\end{split}
\end{align}
This completes the proof.

\section*{Acknowledgements} The authors thank the two anonymous referees for their insightful comments.  The second author is  supported by MEXT KAKENHI Grant Number  26800092, the Inamori foundation research grant, and the Kansai University Subsidy for Supporting young Scholars 2014.

\bibliographystyle{abbrv}
 
\bibliography{ospbib2}

\end{document}